\setlist[enumerate]{itemsep=0mm, parsep=0mm}
\newcommand{\pushright}[1]{\ifmeasuring@#1\else\omit\hfill$\displaystyle#1$\fi\ignorespaces}
\newcommand{\pushleft}[1]{\ifmeasuring@#1\else\omit\displaystyle#1\hfill\fi\ignorespaces}
\newcommand{\specialcell}[1]{\ifmeasuring@#1\else\omit$\displaystyle#1$\ignorespaces\fi}
\definecolor{darkGreen}{rgb}{0,0.5,0.1}
\newtheorem{theorem}{Theorem}
\newtheorem{proposition}{Proposition}
\newtheorem{remark}{Remark}
\newtheorem{definition}{Definition}
\newtheorem{assumption}{Assumption}
\newcommand{\highlight}[1]{\ifthenelse{\boolean{showcomments}} {\textcolor{red}{#1}}{}}
\newcommand{\edited}[1]{\ifthenelse{\boolean{showcomments}} {\textcolor{blue}{#1}}{}}
\newcommand{\jamie}[1]{\ifthenelse{\boolean{showcomments}}
{ \textcolor{red}{(Jamie says:  #1)}}{}} 
\newcommand{\sai}[1]{\ifthenelse{\boolean{showcomments}}
{ \textcolor{red}{(Sai says:  #1)}}{}}
\newcommand{\soumya}[1]{\ifthenelse{\boolean{showcomments}}
{ \textcolor{red}{(Soumya says:  #1)}}{}}
\begin{document}
%



\title{Sparse Control Synthesis for Uncertain Responsive Loads with Stochastic Stability Guarantees}

%
%
%

\author{Sai~Pushpak~Nandanoori,~\IEEEmembership{Member,~IEEE,}
        Soumya~Kundu,~\IEEEmembership{Member,~IEEE,}
        Jianming~Lian,~\IEEEmembership{Member,~IEEE,}
        Umesh~Vaidya,~\IEEEmembership{Senior Member,~IEEE,} 
        Draguna~Vrabie,~\IEEEmembership{Member,~IEEE,}
        and~Karanjit Kalsi,~\IEEEmembership{Senior Member,~IEEE}
\thanks{This work was partially supported by the ARPA-E NODES program of the U.S. Department of Energy (DOE).} 
\thanks{S. P. Nandanoori, S. Kundu, D. Vrabie and K. Kalsi are with Pacific Northwest National Laboratory, Richland, WA, 99354, USA, which is operated by Battelle under Contract No. DE-AC05-76RL01830 with the U.S. DOE. (email: \{saipushpak.n, soumya.kundu, draguna.vrabie, karanjit.kalsi\}@pnnl.gov)}
\thanks{J. Lian is with Oak Ridge National Laboratory, Oak Ridge, TN, 37831, USA, which is operated by UT-Battelle, LLC under Contract No. DE-AC05-00OR22725 with the U.S. DOE. (email: lianj@ornl.gov)}
\thanks{U. Vaidya is with Clemson University, Clemson, SC, 29634, USA. (email: uvaidya@clemson.edu)}
}

\maketitle

\begin{abstract}
Recent studies have demonstrated the potential of flexible loads in providing frequency response services. However, uncertainty and variability in various weather-related and end-use behavioral factors often affect the demand-side control performance. This work addresses this problem with the design of a demand-side control to achieve frequency response under load uncertainties. Our approach involves modeling the load uncertainties via stochastic processes that appear as both multiplicative and additive to the system states in closed-loop power system dynamics. 
Extending the recently developed mean square exponential stability (MSES) results for stochastic systems, we formulate multi-objective linear matrix inequality (LMI)-based optimal control synthesis problems to not only guarantee stochastic stability, but also promote sparsity, enhance closed-loop transient performance, and maximize allowable uncertainties. 
%
%
The fundamental trade-off between the maximum allowable (\textit{critical}) uncertainty levels and the optimal stochastic stabilizing control efforts is established. Moreover, the sparse control synthesis problem is generalized to the realistic power systems scenario in which only partial-state measurements are available. Detailed numerical studies are carried out on IEEE 39-bus system to demonstrate the closed-loop stochastic stabilizing performance of the sparse controllers in enhancing frequency response under load uncertainties; as well as illustrate the fundamental trade-off between the allowable uncertainties and optimal control efforts.
\end{abstract}

\begin{IEEEkeywords}
Power system dynamics; Uncertain loads; Demand-side frequency control; Linear matrix inequalities; Stochastic stability.  
\end{IEEEkeywords}

%
\IEEEpeerreviewmaketitle

\section{Introduction}
\IEEEPARstart{S}{ignificant} technological advances in recent years have paved the way for flexible electrical loads to participate in grid ancillary services, including improvement of transient performance via \textit{primary} and \textit{secondary} frequency response (see, for example, \cite{vrakopoulou2017chance,zhang2016distributionally,geng2020chance,callaway2011achieving,lian2016hierarchical,moya2014hierarchical,kundu2017assessment,nandanoori2018prioritized,bhattacharya2019incentive,molina2011decentralized,mallada2017optimal,angeli2011stochastic,nazir2017performance,voice2017stochastic,chertkov2018ensemble,mavalizadeh2020decentralized,li2015market,zhao2012fast,pushpak2017fragility} and the references therein). This type of demand-side control offers relatively faster,
cleaner and cost-effective alternatives to more traditional
stabilizing control measures (e.g. spinning reserves) deployed in the power grid, and can help mitigate the challenges associated with increasing penetration of renewable generation. In comparison to traditional generator-based reserves, however, load-based reserves can be highly uncertain and hence unreliable in providing the required ancillary services \cite{vrakopoulou2017chance}. Engaging flexible and responsive loads for critical grid support, therefore, requires demand-side control synthesis methods that can structurally account for those uncertainties and minimize their impact on the grid transient performance. In the often-used hierarchical load control architectures \cite{callaway2011achieving,moya2014hierarchical,bhattacharya2019incentive,nandanoori2018prioritized,kundu2017assessment}, mitigation of uncertainties may happen at multiple layers, namely, the load aggregation and control layer, as well as the system-level grid operation layer. Several recent efforts explored methods to characterize and mitigate uncertainties in the controllable loads at the aggregator level, including probabilistic assessment of flexibility in providing grid ancillary services at multiple time-scales \cite{kundu2017assessment,chakraborty2019modeling,chakraborty2020stochastic}, priority-based coordination schemes to ensure reliability and mitigate uncertainty of the grid services \cite{nandanoori2018prioritized,mavalizadeh2020decentralized}, joint parameter-state estimation approaches \cite{li2015market}, as well as various Markov-based control schemes and randomized switching strategies for thermostatic loads \cite{angeli2011stochastic,nazir2017performance,voice2017stochastic,chertkov2018ensemble}. In yet other efforts, the grid-operator perspective of managing load uncertainties at the system-level have been considered primarily via various formulations of chance-constrained optimization techniques for capacity reserve determination and optimal dispatch of flexible loads \cite{zhang2016distributionally,vrakopoulou2017chance,geng2020chance}.

In comparison, the problem of quantifying and mitigating the impact of uncertain controllable loads on the system-wide transient performance of the bulk power grid has received relatively less attention. Fragility of the closed-loop power systems with load-frequency control to stochastic parameter uncertainties (such as renewable generation) was investigated in \cite{pushpak2017fragility}. The work \cite{zhao2012fast} studies the effect of stochastic measurement errors on the demand-side frequency control. Robustness of the demand-side frequency control performance to bounded disturbances in the closed-loop power system dynamics, arising due to unknown (model) parameters and unmodeled nonlinearity, have been considered in \cite{moya2014hierarchical,mallada2017optimal}. The existing methods have mostly focused on designing robust demand-side control strategies that can guarantee acceptable transient performance in presence of model uncertainties which are typically modeled as bounded additive disturbances. A key missing piece, however, is the treatment of stochastic uncertainties that arise in the demand-side feedback control action itself, and appear as multiplicative to the system states. This work addresses this issue by designing a sparse demand-side controller that can guarantee closed-loop stochastic stability performance in presence of multiplicative (and additive) uncertainties.

In particular, \cite{lian2016hierarchical,moya2014hierarchical,kundu2017assessment,nandanoori2018prioritized,bhattacharya2019incentive,molina2011decentralized} {considers the distributed and hierarchical demand-side primary frequency control methods.} In these methods, the participating thermostatic loads are coordinated in a distributed fashion to extract stabilizing response to disturbances in the grid, often in the form of proportional feedback control such as the power-frequency droop response. Uncertainties in load behavior, however, show up as errors in the controller gains, e.g. discrepancies in the actual vs. desired slope of the droop response \cite{kundu2017assessment,nandanoori2018prioritized}. These uncertainties can be modeled as multiplicative to the system states (e.g. frequencies) having a nonlinear impact on the system dynamics, unlike the additive uncertainties. {Mean square stability analysis of a stochastic power network where the uncertainty in system inertia appears multiplicative to the system dynamics is studied in} \cite{guo2019performance}. 
Recent efforts have looked into the problem of control synthesis with closed-loop stochastic stability guarantees in dynamical systems with both multiplicative and additive uncertainties \cite{nandanoori2018mean,elia2005remote,vaidya2010limitations,diwadkar2014stabilization}. 
%
%
Specifically, the work in \cite{nandanoori2018mean} provides sufficient and necessary conditions for mean square exponentially stability (MSES) for stochastic dynamical systems with multiplicative uncertainties. In addition, a static full-state feedback control synthesis was proposed in \cite{nandanoori2018mean} for the special case of single-input channel uncertainty, with a quantification on the maximum allowable (critical) uncertainty level. The approach in \cite{nandanoori2018mean}, however, has limited applicability in realistic power systems because of the assumptions placed on the nature of the uncertainty (single-input) and the availability of the measurements (full-state); and the non-consideration of design concerns such as sparsity, control efforts and closed-loop transient performance.

{The main contributions can be summarized as follows:}
\begin{enumerate}
    \item[a)] {modeling of multi-variate uncertainties (both multiplicative and additive) in the  controllable and uncontrollable loads resulting in a stochastic power network;} 
    \item[b)] {an LMI-based control synthesis problem for the stochastic system to guarantee closed-loop stochastic stability under multi-variate uncertainties along with a quantification of the allowable uncertainty levels;}
    \item[c)] {modification to the LMI control synthesis problem to address practical considerations in real power networks such as partial observation, desired transient performance, and sparsity promoting control; and}
    \item[d)] {numerical studies performed on IEEE 39-bus system to demonstrate the fundamental trade-offs between the allowable uncertainties and the optimal control efforts, and to illustrate the advantages of the proposed sparse controller over a conventional linear-quadratic-regulator (LQR) controller during contingency scenarios (e.g., severe under- and over-frequency events).}
\end{enumerate}

The rest of the paper is organized as follows. {Sec.}\,\ref{Sec_model} {describes the stochastic power system model with controllable and uncontrollable loads. Sec.} \ref{Sec_problem} {presents the stochastic stability definitions and formally introduces the stochastic stabilizing control design problem.} The convex optimization formulations for the sparse control synthesis with stochastic stability guarantees is presented in Sec.\,\ref{sec_robust_control}. Sec.\,\ref{Sec_simulation} illustrates the closed-loop controller performance via numerical studies on the IEEE 39-bus system; before we end the paper with concluding remarks in Sec.\,\ref{Sec_conclusion}.

\section{Stochastic Power System Model}
\label{Sec_model} 

We consider a power systems network with controllable loads which are coordinated to provide distributed frequency response for enhanced transient stability performance \cite{lian2016hierarchical,moya2014hierarchical,kundu2017assessment,nandanoori2018prioritized,bhattacharya2019incentive,molina2011decentralized}. Close to the nominal operating point, the power system dynamics can be modeled in the following control-affine form:
\begin{align}\label{E:sys}
\dot x &= A_0\, x + B_0 P_{in}\,,\quad y = C_0\, x\,,
\end{align}
where $x\!\in\!\mathbb{R}^n$ is an $n$-dimensional state vector; $P_{in}\!\in\!\mathbb{R}^m$ is the $m$-dimensional vector of power injection inputs at various buses across the network; $y\!\in\!\mathbb{R}^s$ is the $s$-dimensional sensor measurements vector; $A_0\!\in\!\mathbb{R}^{n\times n},B_0\!\in\!\mathbb{R}^{n\times m}$ and $C_0\!\in\!\mathbb{R}^{s\times n}$ are, respectively, the system matrix, input matrix and output matrix. While the generic form of the system model shown in \eqref{E:sys} is sufficient for the development of the key technical arguments in this paper, a specific example of it is provided in Sec.\,\ref{Sec_simulation} for illustration purposes. We assume the matrices $A_0,B_0$ and $C_0$ to be deterministic, while the injected power inputs ($P_{in}$) are stochastic in nature. The following, typically non-restrictive, assumptions are made to allow the development of the technical results of this paper:
\begin{assumption}\label{AS:sys}

{$A_0$ is Hurwitz; and the pairs $(A_0,B_0)$ and $(C_0,A_0)$ are, respectively, controllable and observable, i.e.,}
\begin{align*}
    &\text{rank}\begin{bmatrix}B_0 & A_0 B_0 & \cdots & A_0^{n-1} B_0 \end{bmatrix}=n\\
    \text{and }\,~\,&\text{rank}\begin{bmatrix}C_0^{\top} & (C_0 A_0)^{\top} & \cdots & (C_0 A_0^{n-1})^{\top} \end{bmatrix}^{\top}=n\,.
\end{align*}
\end{assumption}

The power injection at each bus is composed of three major components: the generated power $P_g\!\in\!\mathbb{R}_{\geq 0}^m$, and the controllable load demand $P_{cd}\!\in\!\mathbb{R}_{\geq 0}^m$, and the non-controllable load demand $P_{ncd}\!\in\!\mathbb{R}_{\geq 0}^m$. Following the convention that a positive (respectively, negative) value of injection refers to net-generation (respectively, net-load), we have
\begin{align}
    P_{in} = P_g-\!P_{ncd}-\!P_{cd}\,.
\end{align} 
The controllable demand ($P_{cd}$) consists of various flexible loads, such as residential air-conditioners and electric water-heaters, which can be coordinated in hierarchical and distributed manner to provide grid stabilizing control support, as illustrated in \cite{lian2016hierarchical,moya2014hierarchical,kundu2017assessment,nandanoori2018prioritized,bhattacharya2019incentive,molina2011decentralized}. Generally, such control responsive loads can be modeled as:
\begin{align}\label{eq:controllable_power_input}
    P_{cd} = P_{cd,0} + K y\quad\text{(with $K\!\in\!\mathbb{R}^{m\times s}$)}\,,
\end{align}
where $P_{cd,0}\!\in\!\mathbb{R}_{\geq 0}^m$ are the nominal {(or, the baseline)} demand associated with the responsive loads, while $Ky$ represents the control-responsive part which responds to the grid measurements ($y$) as per the output feedback control gain matrix $K$. Unlike the generators, where the controller parameters can often be tuned accurately, the controller gain ($K$) in responsive loads is uncertain. For example, let us consider the case of distributed and hierarchical frequency control by a large number of residential thermostatic loads as in \cite{lian2016hierarchical,moya2014hierarchical,kundu2017assessment,nandanoori2018prioritized,bhattacharya2019incentive,molina2011decentralized}. Each thermostatic load receives a frequency threshold, and switches `on' (respectively, `off') if the measured frequency increases (respectively, decreases) beyond the associated threshold. Weather-related and end-use behavioral factors, however, drive uncertainties in the {load consumption and results in uncertainties in} the load response to those control thresholds \cite{kundu2017assessment,nandanoori2018prioritized}, which reflect in an uncertain aggregate controller gain matrix ($K$). 
We model these demand-side control uncertainties via
\begin{subequations}\label{E:uncertain_K}
\begin{align}
    K &= \left(I - \Sigma\, \Xi\right)  K_0\,,\\
\text{with,}~\, \Xi &= \text{diag}(\xi_1, \dots,\xi_m),\,~\Sigma \!=\! \text{diag}(\sigma_1, \dots, \sigma_m)\,,
\end{align}
\end{subequations}
where $K_0\!\in\!\mathbb{R}^{m\times s}$ is the nominal control gain matrix obtained from design; $\Xi$ and $\Sigma$ are $m\!\times\! m$ diagonal matrices with $\lbrace \xi_i\rbrace_{i=1}^m$ and $\lbrace \sigma_i\rbrace_{i=1}^m$, respectively, as the diagonal entries; for each $i \in \{1,2,\dots,m\}$, $\xi_i = \frac{d\Delta_i}{dt}$ with $\{\Delta\}_{i=1}^m$ being the \textit{standard independent Wiener process} \cite{nandanoori2018mean} and each $\xi_i$ is a \textit{standard Gaussian white noise}, while $\sigma_i\!\geq\!0$ is a scaling factor for $\xi_i$ representing the associated standard deviation of the Gaussian white noise. Notice that the uncertainty modeled in \eqref{E:uncertain_K} appears as multiplicative to the system output ($y$) and, thereby, to the system states ($x$). Additionally, we allow uncertainties in the non-controllable load demand ($P_{ncd}$) as
\begin{align}\label{E:uncertain_ncd}
    P_{ncd} = P_{ncd,0} - \eta\,,\,\text{ with }\eta\!=\!\begin{bmatrix}\eta_1 & \dots &\eta_m\end{bmatrix}^\top\,,
\end{align}
where $P_{ncd,0}\!\in\!\mathbb{R}_{\geq 0}^m$ are the nominal values of the non-controllable load demand, and for each $i \in \{1,2,\dots,m\}$, $\eta_i = \frac{d\zeta_i}{dt}$ with $\{\zeta\}_{i=1}^m$ being the \textit{standard independent Wiener process} \cite{nandanoori2018mean} and each $\eta_i$ is a zero-mean \textit{standard Gaussian white noise}. Note here uncertainty models similar to \eqref{E:uncertain_ncd} can also be applicable to renewable generation (if any), as was shown in \cite{pushpak2018stochastic}. However, for the sake of simplicity in this paper, we consider the generated power ($P_g$) to be deterministic, while the uncertainties (additive and multiplicative) are modeled in the controllable and non-controllable loads, i.e.
\begin{subequations}
\begin{align}
P_{in}&=P_{in,0}-\!\left(I \!-\! \Sigma\, \Xi\right)  K_0\,C_0\,x+\eta\,,\\
\text{where }\,P_{in,0}&= P_g-\!P_{cd,0}-\!P_{ncd,0}\,.\label{E:power_nominal}
\end{align}
\end{subequations}

The resulting closed-loop system model {(shown in Fig.} \ref{fig:closed_loop_system_unc}) with stochastic uncertainties in load demand - both additive and multiplicative to the system states ($x$) - is given by
\begin{subequations}\label{eq_stochastic_sys_compact}
\begin{align}
\dot x &= A\, x + B \Sigma\, \Xi\, C\, x +  B P_{in,0} + B\, \eta\,,\\
\text{where }A\!&=\! A_0 \!-\! B_0\, K_0\, C_0\,, ~B\!=\!B_0\,, ~C \!=\! K_0\, C_0\,.
\end{align}\end{subequations}
\begin{figure*}[h]
\centering
\includegraphics[width = 0.75 \textwidth]{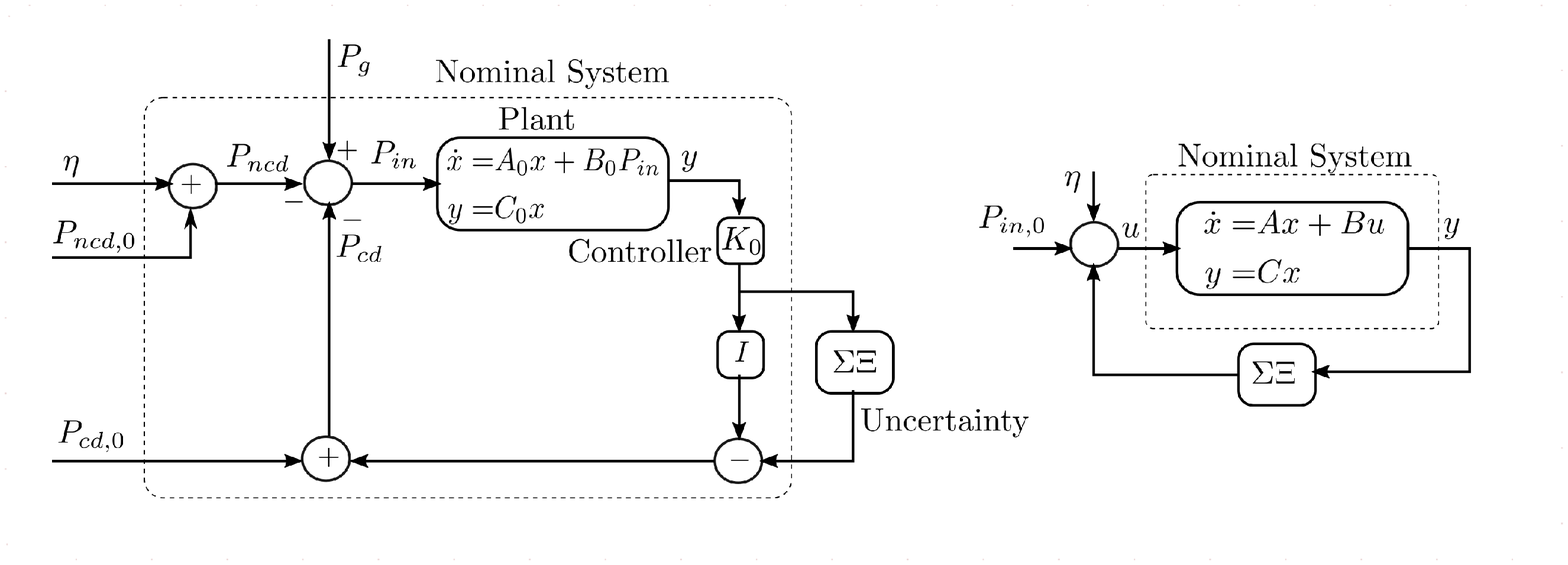}
\caption{The open loop system with uncertainties in the controllable and uncontrollable power is cast as a nominal system with multiplicative uncertainty in the feedback and additive uncertainty appearing as input with $A\!=\! A_0 \!-\! B_0\, K_0\, C_0\,, ~B\!=\!B_0\,, ~C \!=\! K_0\, C_0\, P_{in,0}= P_g-\!P_{cd,0}-\!P_{ncd,0}\,.$}
\label{fig:closed_loop_system_unc}
\end{figure*}

The term $B P_{in,0}$ acts as a constant forcing to the system. Under Assumption \ref{AS:sys}, when $A$ is Hurwitz, we can choose a new coordinate $z\!=\!x\!+\!A^{-1}BP_{in,0}$, and rewrite \eqref{eq_stochastic_sys_compact} as:
\begin{align}\label{eq_stochastic_sys_compact_z}
\dot z &= A\, z + B \Sigma\, \Xi\, C\, z - B \Sigma\, \Xi\,C A^{-1} B P_{in,0} + B\, \eta\,.
\end{align}
Notice that the uncertainty appears as both multiplicative to the system states (the second term on the right hand side), as well as additive (the third and the fourth terms).

\section{Problem Description}
\label{Sec_problem}

{This section presents the necessary definitions of stochastic stability and formally describes the problem statement on design of a sparse controller for the linear stochastic power network in the presence of stochastic uncertainties. }

\vspace{-0.1in}\subsection{Stochastic Stability Analysis}
\label{Subsec_stochastic_stability}

The conventional notions of transient stability do not apply to the stochastic dynamic models of the power systems. Instead, two alternative notions of stability, namely the \textit{second moment bounded stability} (SMBS) and the \textit{mean square exponential stability} (MSES), are more relevant \cite{nandanoori2018mean,pushpak2018stochastic}. In contrast to transient stability which is concerned with the convergence of the power systems states (e.g. frequency), these alternative notions of stochastic stability are concerned with the variances of the power system states. As shown in \cite{nandanoori2018mean,pushpak2018stochastic}, such notions of stochastic stability allow quantification of the allowable uncertainties and hence adopted in this work.

\begin{definition} \cite{nandanoori2018mean} The stochastic system in \eqref{eq_stochastic_sys_compact_z} is second moment bounded stable (SMBS) if there exists a positive scalar $\gamma_0$ such that $\lim_{t \to \infty} \mathbb{E}\left[\| z(t) \|^2 \vert z(0) \right]\!\leq\! \gamma_0,$
for all initial conditions $z(0)$; where the expectation (denoted by $\mathbb{E}$) is taken over the stochastic processes $\lbrace \xi_i(t)\rbrace_{i=1}^m$ and $\lbrace\eta_i(t)\rbrace_{i=1}^m$. 
\end{definition}
Now let us consider another stochastic dynamical system:
\begin{align}\label{eq_stochastic_sys_compact_ztilde}
\dot {\widetilde{z}} &= A\, \widetilde{z} + B \Sigma\, \Xi\, C\, \widetilde{z} \,,\,\quad \widetilde{z}(0)\!=\!z(0)\,,
\end{align}
which is modified from \eqref{eq_stochastic_sys_compact_z} by removing the terms associated with additive uncertainties, while keeping the same initial condition. The following notion of MSES applies to \eqref{eq_stochastic_sys_compact_ztilde}.
\begin{definition} \cite{nandanoori2018mean}
The stochastic system in \eqref{eq_stochastic_sys_compact_ztilde} is mean square exponentially stable (MSES) if there exist positive scalars $\gamma_1,\gamma_2,$ such that $\mathbb{E}\left[ \| \widetilde{z}(t) \|^2  \vert \widetilde{z}(0) \right]\!\leq\! \gamma_1 e^{-\gamma_2 t}  \| \widetilde{z}(0) \|^2 ~\forall t$\,, for every $\widetilde{z}(0)$, where the expectation is taken over $\lbrace \xi_i(t)\rbrace_{i=1}^m$.
\end{definition}
Recently developed results in \cite{nandanoori2018mean,pushpak2018distributed} establish the equivalence between the above notions of stability. In particular, the authors argue \cite[Remark 4]{nandanoori2018mean} that the SMBS of the stochastic dynamical system in \eqref{eq_stochastic_sys_compact_z} is equivalent to the MSES of the stochastic dynamical system in \eqref{eq_stochastic_sys_compact_ztilde}. With this equivalence in mind, we will henceforth only focus on the notion of MSES for the stochastic dynamical system in \eqref{eq_stochastic_sys_compact_ztilde}, noting that the derived results also apply in the context of SMBS for the original system \eqref{eq_stochastic_sys_compact_z}. In addition to Assumption\,\ref{AS:sys}, we require the following to hold for the MSES results:
%
\begin{assumption}
\label{assumptions} \cite{nandanoori2018mean}
The initial condition $\widetilde{z}(0)$ has a bounded variance, and is independent of the processes $\lbrace \xi_i(t)\rbrace_{i=1}^m$\,. 
\end{assumption}

\vspace{-0.1in}\subsection{Problem Statement}
Let us denote, for every $i\!\in\!\lbrace 1,\dots,m\rbrace$, the $i^\text{th}$ column of $B$ by $B_i\!\in\!\mathbb{R}^{n\!\times\! 1}$, and the $i^\text{th}$ row of $C$ by $C_i\!\in\!\mathbb{R}^{1\!\times\! n}$. Therefore, we can rewrite \eqref{eq_stochastic_sys_compact_ztilde} as:
\begin{subequations}\label{eq_stochastic_sys_mult}
\begin{align}
    \dot {\widetilde{z}} &= A\, \widetilde{z} +  \sum_{i=1}^m\sigma_i\, B_i\,C_i\, \widetilde{z}\,\xi_i\,,\\
    \text{where }A\!&=\! A_0 \!-\! B_0\, K_0\, C_0\,, ~B\!=\!B_0\,, ~C \!=\! K_0\, C_0\,.
\end{align}
\end{subequations}

In this paper, we are concerned with designing an output feedback controller gain matrix $K_0$ such that the system \eqref{eq_stochastic_sys_mult} is MSES for some (maximally) allowable uncertainties in the controllable loads, given by the standard deviations $\lbrace \sigma_i\rbrace_{i=1}^m$ associated with each of the stochastic processes $\lbrace \xi_i(t)\rbrace_{i=1}^m$\, {(which are modeled in the controllable loads)}. 

\begin{remark} \label{rem:K_caveat}
{A trivial solution of} $K_0\!=\!0$ {ensures MSES of the system} \eqref{eq_stochastic_sys_mult}, {while admitting extremely large (theoretically unbounded) multiplicative uncertainty levels since the deterministic system}  {is stable, that is,} $A_0$ {is Hurwitz (as per Assumption}\,\ref{AS:sys}). 
\end{remark}
{Following Remark} \ref{rem:K_caveat}, {the challenging aspect of this work is to propose a well-posed control synthesis problem that can be solved for a feedback controller} $K_0$. {Furthermore, the control synthesis problem must address practical power system considerations such as control under partial observation, improved transient performance and reduce the cost and deployment of the developed controls by promoting sparse control design. }

\section{Stochastic Stabilizing Control Synthesis}
\label{sec_robust_control}

We start by applying results from \cite{nandanoori2018mean} into deriving the following MSES stability conditions for the system in \eqref{eq_stochastic_sys_mult}.
\begin{proposition}
The closed-loop stochastic system \eqref{eq_stochastic_sys_mult} is MSES if and only if there exist a positive definite $\mathcal{Q}\!\in\!\mathbb{R}^{n\times n}$, an $M\!\in\!\mathbb{R}^{m\times n}$ and positive scalars $\lbrace\alpha_i\rbrace_{i=1}^m,\lbrace\sigma_i\rbrace_{i=1}^m$ so that
\begin{subequations}\label{E:proposition}
\begin{align}
    &A_0 \mathcal{Q} \!+\! \mathcal{Q} A_0^{\top} \!-\! B_0 M \!-\! M^\top B_0^\top \!+\! \displaystyle \sum_{i=1}^m \alpha_i B_{0_i}  B_{0_i}^{\top} \!\prec 0
    \\
    & \sigma_i^2 M_i \mathcal{Q}^{-1} M_i^{\top} < \alpha_i \quad\forall i \!\in\!\lbrace 1, \dots, m\rbrace. 
\end{align}\end{subequations}
where $M_i\!\in\!\mathbb{R}^{1\times n}$ is the $i^{\text{th}}$ row vector of the matrix $M$, and $B_{0_i}\!\in\!\mathbb{R}^{n\times 1}$ is the $i^{\text{th}}$ column vector of $B_0$\,.
\label{cor:MIMO_system_stability}
\end{proposition}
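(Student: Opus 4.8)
The plan is to convert the stochastic stability requirement into a deterministic matrix inequality via the Lyapunov characterization of MSES, and then to linearize the resulting bilinear terms by a change of variables, finally decoupling the nonlinear scalar coefficients with slack variables. First I would invoke the MSES criterion of \cite{nandanoori2018mean}. Since mean-square stability of a system and of its adjoint coincide (the Lyapunov operator and its adjoint share the same spectrum), \eqref{eq_stochastic_sys_mult} is MSES if and only if there is a positive definite $\mathcal{Q}\in\mathbb{R}^{n\times n}$ with
\begin{align*}
A\mathcal{Q}+\mathcal{Q}A^\top+\sum_{i=1}^m\sigma_i^2\,B_i\,C_i\,\mathcal{Q}\,C_i^\top\,B_i^\top\prec0,
\end{align*}
where $A=A_0-B_0K_0C_0$. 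The key observation is that each $C_i\mathcal{Q}C_i^\top$ is a scalar, so the $i$-th noise term is the rank-one matrix $\sigma_i^2\,(C_i\mathcal{Q}C_i^\top)\,B_{0_i}B_{0_i}^\top$ aligned with the input column $B_{0_i}=B_i$.

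Next I would introduce the change of variables $M:=K_0C_0\mathcal{Q}$, which is a bijection for fixed $\mathcal{Q}\succ0$. Substituting $A=A_0-B_0K_0C_0$ linearizes the cross terms into $A\mathcal{Q}+\mathcal{Q}A^\top=A_0\mathcal{Q}+\mathcal{Q}A_0^\top-B_0M-M^\top B_0^\top$, matching the first three terms of \eqref{E:proposition}. Because the $i$-th row of $M$ obeys $M_i=C_i\mathcal{Q}$, I obtain $C_i=M_i\mathcal{Q}^{-1}$ and hence $C_i\mathcal{Q}C_i^\top=M_i\mathcal{Q}^{-1}M_i^\top$. The adjoint Lyapunov inequality then reads
\begin{align*}
A_0\mathcal{Q}+\mathcal{Q}A_0^\top-B_0M-M^\top B_0^\top+\sum_{i=1}^m\sigma_i^2\bigl(M_i\mathcal{Q}^{-1}M_i^\top\bigr)B_{0_i}B_{0_i}^\top\prec0,
\end{align*}
with the $\sigma_i$ carried along as free parameters.

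Finally I would decouple this single condition from the nonlinear coefficients $\sigma_i^2M_i\mathcal{Q}^{-1}M_i^\top$ by slack variables $\alpha_i$, proving equivalence with \eqref{E:proposition}. In one direction, if \eqref{E:proposition} holds then $\sigma_i^2M_i\mathcal{Q}^{-1}M_i^\top<\alpha_i$ together with $B_{0_i}B_{0_i}^\top\succeq0$ yields $\sum_i\sigma_i^2(M_i\mathcal{Q}^{-1}M_i^\top)B_{0_i}B_{0_i}^\top\preceq\sum_i\alpha_iB_{0_i}B_{0_i}^\top$, so the left-hand side of the displayed inequality is $\preceq$ that of the first inequality in \eqref{E:proposition}, which is $\prec0$. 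For the converse, given the strict displayed inequality I would choose each $\alpha_i$ slightly larger than $\sigma_i^2M_i\mathcal{Q}^{-1}M_i^\top$; by strictness and continuity the first inequality of \eqref{E:proposition} is preserved, while the second holds by construction. Chaining these equivalences with the MSES criterion completes the proof.

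I expect the main obstacle to be selecting the correct Lyapunov form at the outset: only the adjoint form collapses the multiplicative-noise contribution into terms proportional to $B_{0_i}B_{0_i}^\top$, which is exactly what lets the scalar coefficients be pulled out and bounded independently by the $\alpha_i$. Starting instead from the primal form $A^\top P+PA+\sum_i\sigma_i^2C_i^\top B_i^\top PB_iC_i\prec0$ would produce rank-one terms aligned with $C_i^\top$ and obstruct the clean linearization. The remaining steps—the $M=K_0C_0\mathcal{Q}$ substitution and the slack-variable relaxation—are routine LMI manipulations.
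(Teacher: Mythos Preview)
Your proposal is correct and follows essentially the same route as the paper: start from the MSES Lyapunov condition, pass to the dual form $A\mathcal{Q}+\mathcal{Q}A^\top+\sum_i\sigma_i^2 B_iC_i\mathcal{Q}C_i^\top B_i^\top\prec0$, introduce slack scalars $\alpha_i$ to split off the rank-one terms, and linearize via $M:=K_0C_0\mathcal{Q}$. The only cosmetic difference is that the paper cites \cite[Lemma~12]{nandanoori2018mean} for the $\alpha_i$-splitting, whereas you supply the two-line monotonicity/continuity argument directly.
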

\begin{proof}
As per \cite[Theorem 11]{nandanoori2018mean}, the system \eqref{eq_stochastic_sys_mult} is MSES if and only if there exists a positive definite $\mathcal{P}\!\in\!\mathbb{R}^{n\times n}$ satisfying
\begin{align}
    A^{\top} \mathcal{P} + \mathcal{P} A + \sum_{i=1}^m \sigma_i^2 C_i^{\top} B_i^{\top} \mathcal{P} B_i C_i \prec 0
    \label{eq:stability_Lyapunov_eq}
\end{align}
The dual of the condition \eqref{eq:stability_Lyapunov_eq} is given by
{\begin{align}
    A \mathcal{Q} + \mathcal{Q} A^{\top} + \sum_{i=1}^m \sigma_i^2 B_i C_i \mathcal{Q}\, C_i^{\top} B_i^{\top} \prec 0
    \label{eq:stability_Lyapunov_dual_eq}
\end{align}
for some positive definite matrix $\mathcal{Q}\!\in\!\mathbb{R}^{n\times n}$.} On application of \cite[Lemma 12]{nandanoori2018mean}, the matrix inequality in \eqref{eq:stability_Lyapunov_dual_eq} can be split into the following two equivalent inequalities 
\begin{subequations}\label{E:split_ineq}
\begin{align}
    & A \mathcal{Q} + \mathcal{Q} A^{\top} + \sum_{i=1}^m \alpha_i B_i B_i^{\top} \prec 0 \\
    & \sigma_i^2 \,C_i\, \mathcal{Q}\, C_i^{\top} < \alpha_i, \quad\forall i \!\in\!\lbrace 1, \dots, m\rbrace\,,\label{eq:bilinear}
\end{align}
\end{subequations}
by introducing positive scalars $\lbrace \alpha_i\rbrace_{i=1}^m$ as new variables. The proof is complete after we substitute for $A,B,C$ from \eqref{eq_stochastic_sys_mult} in the above inequalities \eqref{E:split_ineq}, and eliminate $K_0$ with a new matrix variable {$M \!:=\! K_0\,C_0\,\mathcal{Q}$}.
\end{proof}

{In view of Remark} \ref{rem:K_caveat}, {the controller synthesis problem becomes well posed, once we enforce certain closed-loop transient performance goals for the nominal deterministic system (with} $\sigma_i\!=\!0$ in \eqref{eq_stochastic_sys_mult}). {In particular, consider a Lyapunov function candidate} $\Psi\!=\!\widetilde{z}^\top \mathcal{P}\widetilde{z}$\,, {for some} $\mathcal{P}\!\succ\!0$ \cite{slotine1991applied}. {Then the closed-loop transient performance goal of the nominal deterministic system} $\dot {\widetilde{z}}\!=\! A\, \widetilde{z} $ {can be reformulated as identifying an acceptably large positive scalar} $\gamma$ {such that}
\begin{align}
    A^\top\mathcal{P}+\mathcal{P}A\prec-\gamma\,I_n\,,
\end{align}
{where} $I_n$ {denotes an identity matrix of size} $n$\,.

The inequalities in \eqref{E:proposition} form a nonconvex feasibility problem with optimization variables $\mathcal{Q},M,\lbrace \alpha_i\rbrace_{i=1}^m$ and $\lbrace \sigma_i^2 \rbrace_{i=1}^m$\,, due to the bilinear constraint \eqref{eq:bilinear}. In the following, we present convex reformulations of \eqref{E:proposition} for synthesizing stochastic stabilizing optimal controller, while maximizing the allowable uncertainties at the controllable loads {and incorporating the transient performance constraints}.

\vspace{-0.1in}\subsection{Full-State Measurement (or, $\text{rank}(C_0)\!=\!n$)}
At first, let us consider the special case of state feedback control synthesis, assuming full-state measurement. Full-state measurement refers to the scenario when the original state vector ($x$) can be uniquely determined from the measurements ($y\!=\!C_0\,x$), i.e. $\text{rank}(C_0)\!=\!n$\,. A special case of this is when $C_0\!=\!I_n$\,.
Let us consider the convex optimization problem:
\begin{subequations}\label{eq:controller_synthesis}
\begin{align}
     \underset{\gamma_1, \gamma_2, \lbrace\beta_i\rbrace_{i=1}^m,\widetilde{\mathcal{Q}},\widetilde{M}}{\text{minimize}}\hspace{0.4in}
    -w_1\gamma_1 - w_2\gamma_2 + w_3\!\sum_{i=1}^m \beta_i \\
    \text{subject to}, \hspace{1.2in} I_n\succeq \widetilde{\mathcal{Q}}\succ \gamma_1I_n\,,\label{eq:Q_constr}\\
    \,~\, \begin{bmatrix}
    \beta_i & \widetilde{M}_i \\
    \widetilde{M}_i^{\top} & \widetilde{\mathcal{Q}}
    \end{bmatrix} \succ 0\,~\, \forall i\!\in\!\lbrace 1,\dots,m\rbrace,  \label{eq:lmi_feas1}\\
    A_0 \widetilde{\mathcal{Q}} +\! \widetilde{\mathcal{Q}} A_0^{\top} \!\!-\! B_0 \widetilde{M} \!-\! \widetilde{M}^{\top}\!B_0^\top \!+\! \displaystyle\sum_{i=1}^m\! B_{0_i} B_{0_i}^{\top} \!\prec -\gamma_2I_n\,, \label{eq:performance}
\end{align}\end{subequations}
where $\lbrace\beta_i\rbrace_{i=1}^m$ and $\gamma_{1},\gamma_2$ are positive scalars, $\widetilde{\mathcal{Q}}\!\in\!\mathbb{R}^{n\times n}$ is a positive definite matrix, $\widetilde{M}\!\in\!\mathbb{R}^{m\times n}$, and $w_1\!>\!0,w_2\!>\!0,w_3\!>\!0$ are scalar weights. {The choice of the scalar weights determine the trade-off between the sparsity and performance costs.} We denote the optimal solution of \eqref{eq:controller_synthesis} by the tuple
\begin{align*}
    \left(\gamma_1^*,\gamma_2^*,\lbrace \beta_i^*\rbrace_{i=1}^m,\mathcal{Q}^*,M^*\right)\,.
\end{align*}
\begin{theorem} (\textsc{Full-Rank $C_0$})
Consider the optimal solution $\!\left(\gamma_1^*,\gamma_2^*,\lbrace \beta_i^*\rbrace_{i=1}^m,\mathcal{Q}^*\!,M^*\right)$ of the problem \eqref{eq:controller_synthesis}. Under full-state observation, i.e. $\text{rank}(C_0)\!=\!n$\,, a controller with gain 
\begin{align}\label{eq:K_uniq}
        K_0\!=\!M^*\left(\mathcal{Q}^*\right)^{-1}\left(C_0^\top C_0\right)^{-1}C_0^\top
    \end{align}
guarantees MSES of \eqref{eq_stochastic_sys_mult} with allowable uncertainties of $\sigma_i\!<\!1/\!\sqrt{\beta_i^*}$ for every $i\!\in\!\lbrace 1,\dots,m\rbrace$, with a closed-loop transient performance certificate as $A\mathcal{Q}^*\!+\mathcal{Q}^*\!A^{\!\top}\!\!\prec\!-\gamma_2^*I_n$\,.
\label{thm:main_thm}
\end{theorem}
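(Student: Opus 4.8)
The plan is to show that the optimizer of the convex program \eqref{eq:controller_synthesis}, together with the reconstructed gain \eqref{eq:K_uniq}, satisfies the sufficient conditions of Proposition \ref{cor:MIMO_system_stability} with the identification $\alpha_i\!=\!1$ for every $i$, and then to read off the transient certificate directly from the optimal constraint \eqref{eq:performance}. First I would verify that the constructed $K_0$ is consistent with the variable substitution $M\!=\!K_0 C_0 \mathcal{Q}$ on which Proposition \ref{cor:MIMO_system_stability} rests. Substituting \eqref{eq:K_uniq} and using that $\text{rank}(C_0)\!=\!n$ makes $C_0^\top C_0$ invertible, I would compute $K_0 C_0 \mathcal{Q}^* = M^* (\mathcal{Q}^*)^{-1}(C_0^\top C_0)^{-1}C_0^\top C_0 \mathcal{Q}^* = M^*$, since $(C_0^\top C_0)^{-1}C_0^\top C_0 = I_n$. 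This is precisely where the full-rank hypothesis is essential: it guarantees that the freely optimized matrix $M^*$ is realizable by an actual output-feedback gain, and I expect this consistency check to be the conceptual crux of the argument (indeed, it is the obstruction that forces a separate treatment of the partial-observation case). Note that $\widetilde{\mathcal{Q}}\!\succ\!\gamma_1 I_n$ in \eqref{eq:Q_constr} keeps $\mathcal{Q}^*\!\succ\!0$ invertible, which both the reconstruction and the Schur step below require.

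Next I would translate the two inequalities of Proposition \ref{cor:MIMO_system_stability}. For the matrix inequality in \eqref{E:proposition}, I would use $A_0 \mathcal{Q}^* - B_0 M^* = (A_0 - B_0 K_0 C_0)\mathcal{Q}^* = A\mathcal{Q}^*$ so that the optimal performance constraint \eqref{eq:performance} reads $A\mathcal{Q}^* + \mathcal{Q}^* A^\top + \sum_{i} B_{0_i}B_{0_i}^\top \prec -\gamma_2^* I_n \prec 0$, which is exactly the first inequality of \eqref{E:proposition} with $\alpha_i\!=\!1$. For the bilinear condition \eqref{eq:bilinear}, I would apply the Schur complement to the LMI \eqref{eq:lmi_feas1}: since $\widetilde{\mathcal{Q}}\!\succ\!0$, positive definiteness of the block matrix is equivalent to $\beta_i^* > M_i^* (\mathcal{Q}^*)^{-1}(M_i^*)^\top$ at the optimum. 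Multiplying by $\sigma_i^2\!>\!0$ and invoking the hypothesis $\sigma_i < 1/\sqrt{\beta_i^*}$ (hence $\sigma_i^2 \beta_i^* < 1$) then gives $\sigma_i^2 M_i^* (\mathcal{Q}^*)^{-1}(M_i^*)^\top < \sigma_i^2 \beta_i^* < 1 = \alpha_i$, which is precisely \eqref{eq:bilinear}. With both conditions of Proposition \ref{cor:MIMO_system_stability} verified, MSES of \eqref{eq_stochastic_sys_mult} follows.

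Finally, for the transient performance certificate I would rewrite the optimal constraint \eqref{eq:performance} as $A\mathcal{Q}^* + \mathcal{Q}^* A^\top \prec -\gamma_2^* I_n - \sum_i B_{0_i}B_{0_i}^\top$ and then discard the positive semidefinite term $\sum_i B_{0_i}B_{0_i}^\top \succeq 0$ to conclude $A\mathcal{Q}^* + \mathcal{Q}^* A^\top \prec -\gamma_2^* I_n$, as claimed. The rest — that the objective only steers the choice of $(\gamma_1^*,\gamma_2^*,\beta_i^*)$ toward larger decay/uncertainty margins and more sparsity without affecting feasibility — is routine and never enters the stability argument. In summary, the whole proof amounts to checking that the convex relaxation's minimizer is feasible for the nonconvex conditions of Proposition \ref{cor:MIMO_system_stability}; the only genuinely delicate step is the exact recovery of $K_0$ from $M^*$, which hinges on the invertibility of $C_0^\top C_0$ guaranteed by full-rank $C_0$.
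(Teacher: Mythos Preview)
Your proposal is correct and follows essentially the same route as the paper's proof: verify that the optimizer of \eqref{eq:controller_synthesis} satisfies the conditions of Proposition~\ref{cor:MIMO_system_stability}, use the Schur complement on \eqref{eq:lmi_feas1} to recover the bilinear uncertainty bound, reconstruct $K_0$ from $M^*$ via the pseudo-inverse of $C_0$, and read off the transient certificate by dropping the positive semidefinite term $\sum_i B_{0_i}B_{0_i}^\top$. The only minor difference is that the paper first introduces a common scaling parameter $\widetilde{\alpha}$ (setting $\alpha_i=\widetilde{\alpha}$ for all $i$) and then normalizes via $\widetilde{\mathcal{Q}}=\mathcal{Q}/\widetilde{\alpha}$, $\widetilde{M}=M/\widetilde{\alpha}$, whereas you go straight to $\alpha_i=1$; your shortcut is equivalent and arguably cleaner. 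The paper's proof also contains auxiliary material---the well-posedness discussion around the constraint $I_n\succeq\widetilde{\mathcal{Q}}$ and the Pareto-type bound \eqref{eq:pareto}---that goes beyond what the theorem statement strictly claims, so omitting it is not a gap in your argument.
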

\begin{proof}
Note that, for a positive scalar $\widetilde{\alpha}$\,, the inequalities:
\begin{subequations}\label{E:proposition_recast}
\begin{align}
    \!\!\!&A_0 \mathcal{Q} \!+\! \mathcal{Q} A_0^{\top} \!-\! B_0 M \!-\! M^\top\! B_0^\top \!+ \displaystyle \widetilde{\alpha}\sum_{i=1}^m\! B_{0_i}  B_{0_i}^{\top} \!\prec\! -\gamma I_n\!\!\\
    \!\!\!& \sigma_i^2 M_i \mathcal{Q}^{-1} M_i^{\top} < \widetilde{\alpha} \quad\forall i \!\in\!\lbrace 1, \dots, m\rbrace.\!\!
\end{align}\end{subequations}
form a special case of the conditions \eqref{E:proposition} in Proposition\,\ref{cor:MIMO_system_stability}, with $\alpha_i\!=\!\widetilde{\alpha}$ for every $i\!\in\!\lbrace 1,\dots,m\rbrace$\,. Therefore, the inequalities in \eqref{E:proposition_recast} are sufficient conditions for MSES of the system \eqref{eq_stochastic_sys_mult}. Defining $\widetilde{\mathcal{Q}}\!:=\!\mathcal{Q}/\widetilde{\alpha},\,\widetilde{M}\!:=\!M/\widetilde{\alpha},\,\beta_i\!:=\!1/\sigma_i^2~\forall i\!\in\!\lbrace 1,\dots,m\rbrace$, and $\gamma_2\!:=\!\gamma/\widetilde{\alpha}$, we can recast \eqref{E:proposition_recast} into:
\begin{subequations}\label{E:proposition_tilde}
\begin{align}
    &A_0 \widetilde{\mathcal{Q}} +\! \widetilde{\mathcal{Q}} A_0^{\top} \!\!-\! B_0 \widetilde{M} \!-\! \widetilde{M}^{\top}B_0^\top \!+\! \sum_{i=1}^m B_{0_i} B_{0_i}^{\top} \!\prec\! -\gamma_2\,I_n\label{eq_performance}\\
    & \widetilde{M}_i \widetilde{\mathcal{Q}}^{-1} \widetilde{M}_i^{\top} < \beta_i \quad\forall i \!\in\!\lbrace 1, \dots, m\rbrace. \label{eq:lmi_feas2}
\end{align}\end{subequations}
Since $\widetilde{\mathcal{Q}}\!\succ\! 0$, we can apply Schur complement to establish the equivalence between \eqref{eq:lmi_feas2} and the matrix inequality in \eqref{eq:lmi_feas1}. Therefore, any $K_0$ satisfying $K_0C_0\!=\!M^*\!\left(\mathcal{Q}^*\right)\!^{-1}\!$ guarantees MSES of \eqref{eq_stochastic_sys_mult}. Moreover, since $\text{rank}\left(C_0\right)\!=\!n$, $C_0^\top C_0$ is invertible, and a unique $K_0$ can be obtained using the \textit{pseudo-inverse} of $C_0$ as in \eqref{eq:K_uniq}\,. Additionally, it is easy to check that every $\sigma_i\!=\!1/\sqrt{\beta_i}\leq\!1/\sqrt{\beta_i^*}$ satisfies the condition \eqref{eq:lmi_feas2} with $M^*$ and $\mathcal{Q}^*$\,. Henceforth we will refer to 
\begin{align} \label{eq:critical_unc_full_obs}
    \sigma_i^*\!:=\!1/\sqrt{\beta_i^*}~\forall i \!\in\!\lbrace 1, \dots, m\rbrace
\end{align} 
as the \textit{critical uncertainties} (maximally) allowable at the controllable loads for guaranteed closed-loop stochastic stability.

Note that the problem is ill-defined (unbounded) without the inequality constraint $I_n\!\succeq\!\widetilde{Q}$ in \eqref{eq:Q_constr}. Since $A_0$ is Hurwitz (Assumption\,\ref{AS:sys}), there exists a $\mathcal{Q}\!\succ\!0$ such that $A_0 \mathcal{Q} +\! \mathcal{Q} A_0^{\top}\!\!\prec\!0$\,. It is easy to see that the choice of $\widetilde{M}\!=\!0_{m\times n}$ and $\widetilde{\mathcal{Q}}\!=\!l\mathcal{Q}$, for any arbitrarily large scalar $l\!>\!0$\,, will satisfy both the conditions in \eqref{E:proposition_tilde}. The constraints \eqref{eq:Q_constr} ensure well-posedness of the problem by ensuring boundedness of the solutions as well as ensuring well-conditioned $\mathcal{Q}^*$\,. In particular, the minimum eigenvalue of the optimal $Q^*$ satisfies $\lambda_{\min}(\mathcal{Q}^*)\!>\!\gamma_1^*$. Since $\widetilde{M}_i \widetilde{\mathcal{Q}}^{-1} \widetilde{M}_i^{\top}\!\leq\!\|\widetilde{M}_i\|_2^2/\lambda_{\min}(\widetilde{\mathcal{Q}})$, we can establish the following bound on the critical uncertainties:
\begin{align}\label{eq:pareto}
    \forall i:\,~\beta_i^*\!\leq\!\|M_i^*\|_2^2/\lambda_{\min}(\mathcal{Q}^*)\!\implies\! \sigma_i^*\!>\!\sqrt{\gamma_1^*}/\|M_i^*\|_2\,.
\end{align}

Finally, since each $B_{0_i} B_{0_i}^{\top}\!\!\succeq\! 0$, we immediately note from \eqref{eq_performance} that $A\mathcal{Q}^*\!+\!\mathcal{Q}^*\!A^{\!\top}\!\!\prec\!-\gamma_2^*I_n$. With $\!\mathcal{P}\!\!=\!\!\left(\mathcal{Q^*}\right)\!^{\!-\!1}\!\!$, we get the dual condition: $A^{\!\top}\!\mathcal{P}\!+\!\mathcal{P}\!A\!\prec\!-\gamma_2^*I_n$. Using the Lyapunov stability theory \cite{slotine1991applied}, this certifies for the nominal deterministic system $\dot{\widetilde{z}}\!=\!A\widetilde{z}$ a guaranteed rate of exponential convergence to the nominal equilibrium given by $\dot{\Psi}\!<\!-\gamma_2^*\|\widetilde{z}\|_2^2$, for some Lyapunov function $\Psi\!=\!\widetilde{z}^{\top}\mathcal{P}\widetilde{z}$. This completes the proof.
\end{proof}

\textbf{Fundamental Control Trade-Off:} The relation \eqref{eq:pareto} illustrates a fundamental trade-off between the optimal control efforts, and the allowable (critical) uncertainty levels at each controllable load bus. This is an artifact of the multiplicative uncertainties considered in this paper, in which the impact of the uncertainties on the system dynamics are accentuated via high gain controls. Thus we have a Pareto curve of optimal solutions trading off control efforts with allowable uncertainties, as illustrated in Sec.\,\ref{Sec_simulation}. This is unlike the scenario with additive uncertainties \cite{moya2014hierarchical}, in which a high gain controller is typically expected to admit larger uncertainties.

\textbf{Sparsity-Promoting Control Design:} The control synthesis problem \eqref{eq:controller_synthesis} can be easily reformulated to design a row-sparse controller gain ($K_0$). Row-sparsity in $K_0$ is a desirable property since it allows the grid operator to engage as few load control inputs as possible. Since a row-sparse optimal $M^*$ also implies a row-sparse optimal $K_0$\,, it is sufficient to promote row-sparsity of $M^*$ in a reformulation of \eqref{eq:controller_synthesis}. Following an approach similar to \cite{polyak2013lmi}, this can be achieved by adding a weighted $l_{1,p}$-norm of $\widetilde{M}$ in the objective in \eqref{eq:controller_synthesis}, defined as
\begin{align}\label{eq:M_norm}
    \!\!\text{($l_{1,p}$-norm)}\,~\,\|\widetilde{M}\|_{1,p}:=\!\!\sum_{i=1}^m\|\widetilde{M}_i\|_p\, && \text{for } p\!\in\!\lbrace 1,2,\infty\rbrace\,.\!\!
\end{align}
Therefore, a sparsity-promoting reformulation of the optimal control design problem \eqref{eq:controller_synthesis} is given by:
\begin{subequations}\label{eq:controller_synthesis_sparse}
\begin{align}
     \!\!\underset{\gamma_1, \gamma_2, \lbrace\beta_i\rbrace_{i=1}^m,\widetilde{\mathcal{Q}},\widetilde{M}}{\text{minimize}}\hspace{0.2in}\begin{bmatrix}w_1~w_2~w_3\end{bmatrix}\begin{bmatrix}-\gamma_1\\-\gamma_2\\\displaystyle\sum_{i=1}^m \beta_i\end{bmatrix}+\|\widetilde{M}\|_{1,p} \\
    \!\!\text{subject to}, \hspace{0.95in}\eqref{eq:Q_constr}\,,~ \eqref{eq:lmi_feas1}\,,~ \eqref{eq:performance}\,,
\end{align}\end{subequations}
for different choices of $p\in\!\lbrace 1,2,\infty\rbrace$\,.

\vspace{-0.1in}\subsection{Partial-State Measurement (or, $\text{rank}(C_0)\!<\!n$)}\label{sec:control_partial}

In most power systems today, only limited number of buses are equipped with synchrophasors giving access to only partial-state measurements (e.g. phase angles and frequencies at generator buses alone). In such a case, $\text{rank}(C_0)\!<\!n$ and Theorem\,\ref{thm:main_thm} does not apply. While it is possible to identify the best $K_0$ by minimizing $\|M^*\!-\!K_0\,C_0\,Q^*\|_F$ (where $\|\cdot\|_F$ is the Frobenius norm), the equality $M^*\!=\!K_0\,C_0\,Q^*$ may not hold exactly, thereby not guaranteeing the MSES conditions. Instead, we adopt the following two-stage control synthesis approach. It is worth noting that such approaches often offer attractive solutions to complex control applications such as polynomial control design for power networks \cite{kundu2015stability,kundu2019distributed}.

\textbf{Stage 1 (Pre-computation):} In the first stage, we solve a problem similar to \eqref{eq:controller_synthesis_sparse}, but with an additional constraint to enforce negative-definiteness of $A_0 \widetilde{\mathcal{Q}} +\! \widetilde{\mathcal{Q}} A_0^{\top}$\,, as follows:
\begin{subequations}\label{eq:controller_synthesis_stage1}
\begin{align}
     \!\!\underset{\gamma_1, \gamma_2, \lbrace\beta_i\rbrace_{i=1}^m,\widetilde{\mathcal{Q}},\widetilde{M}}{\text{minimize}}\hspace{0.2in}\begin{bmatrix}w_1~w_2~w_3\end{bmatrix}\begin{bmatrix}-\gamma_1\\-\gamma_2\\\displaystyle\sum_{i=1}^m \beta_i\end{bmatrix}+\|\widetilde{M}\|_{1,p} \\
    \!\!\text{subject to}, \hspace{1.15in}A_0 \widetilde{\mathcal{Q}} +\! \widetilde{\mathcal{Q}} A_0^{\top}\!\prec 0\,,\label{eq:open_loop_stable}\\
    \eqref{eq:Q_constr}\,,~ \eqref{eq:lmi_feas1}\,,~ \eqref{eq:performance}\,,
\end{align}\end{subequations}
The additional constraint \eqref{eq:open_loop_stable} allows us to guarantee the existence of a feasible control solution in the second stage, as would be made clear next. The first stage problem \eqref{eq:controller_synthesis_stage1} only serves as pre-computation stage to find an optimal $\mathcal{Q}^*$ which would be used in the second stage.

\textbf{Stage 2 ($K_0$ Synthesis):} The optimal $\mathcal{Q}^*$ found in the pre-computation stage is used in the second-stage optimization problem to identify the optimal stabilizing control gain $K_0$\,, along with an optimal $\gamma_2^*$, as follows:
%
\begin{align}\label{eq:controller_synthesis_stage2}
     \!\!\!\!\underset{\gamma_2,\widetilde{K}}{\text{minimize}}\hspace{0.05in}~-\!w_2\gamma_2\!+\!w_3\|\widetilde{K}C_0\sqrt{\mathcal{Q}^*}\|_{1,2} \!+\|\widetilde{K}C_0\mathcal{Q}^*\|_{1,p}\!\\
    \!\!\!\!\text{subject to}, \hspace{0.2in}\begin{aligned} \left(A_0\!-\!B_0 \widetilde{K}C_0\right)\! \mathcal{Q}^*\!\! +\! \mathcal{Q}^*\! \!\left(A_0\!-\!B_0 \widetilde{K}C_0\right)\!^\top \\
    +\sum_{i=1}^m\! B_{0_i} B_{0_i}^{\top} \!\prec -\gamma_2I_n~
    \end{aligned},\!\!\!\!\notag
\end{align}
%
Let us denote the optimal solution of \eqref{eq:controller_synthesis_stage2} by $\!\left(\gamma_2^*,K_0\right)$\,.

\begin{theorem}(\textsc{Partial-Rank $C_0$})
The control synthesis problem \eqref{eq:controller_synthesis_stage2} is feasible. Moreover, the optimal controller $K_0$ guarantees MSES of \eqref{eq_stochastic_sys_mult} with allowable uncertainties of 
\begin{align}\label{eq:sigma}
\sigma_i\!<\!\left(\|K_{0_i}C_0\sqrt{\mathcal{Q}^*}\|_2\right)^{-1}   \quad \forall i\!\in\!\lbrace 1,\dots,m\rbrace
\end{align}
where $K_{0_i}$ denotes the $i^\text{th}$ row of $K_0$\,, achieving a closed-loop transient performance certificate as $A\mathcal{Q}^*\!+\mathcal{Q}^*\!A^{\!\top}\!\!\prec\!-\gamma_2^*I_n$\,.
\end{theorem}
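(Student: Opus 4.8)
The plan is to recognize that, once $\mathcal{Q}^*$ is frozen by Stage 1, the constraint of \eqref{eq:controller_synthesis_stage2} is nothing but a specialization of the first condition of Proposition \ref{cor:MIMO_system_stability} in which the matrix variable $M$ is forced to factor through the output matrix. Concretely, I would set $\mathcal{Q}\!=\!\mathcal{Q}^*$, $\alpha_i\!=\!1$ for every $i$ (exactly the coefficient sitting in front of $B_{0_i}B_{0_i}^\top$ in \eqref{eq:controller_synthesis_stage2}), and $M\!=\!K_0 C_0 \mathcal{Q}^*$. With these identifications the Stage-2 constraint reads $A_0\mathcal{Q}^* + \mathcal{Q}^* A_0^\top - B_0 M - M^\top B_0^\top + \sum_i B_{0_i}B_{0_i}^\top \prec -\gamma_2^* I_n \prec 0$, which is precisely condition (a) of \eqref{E:proposition}. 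This is the backbone; everything else is bookkeeping layered on top of it.

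The uncertainty bound \eqref{eq:sigma} then follows by checking condition (b) of Proposition \ref{cor:MIMO_system_stability}. Writing $M_i\!=\!K_{0_i}C_0\mathcal{Q}^*$ for the $i$-th row and using the symmetric square root $\mathcal{Q}^*\!=\!\sqrt{\mathcal{Q}^*}\sqrt{\mathcal{Q}^*}$, I would compute $M_i(\mathcal{Q}^*)^{-1}M_i^\top = K_{0_i}C_0\mathcal{Q}^* C_0^\top K_{0_i}^\top = \|K_{0_i}C_0\sqrt{\mathcal{Q}^*}\|_2^2$, so that \eqref{E:proposition}(b) with $\alpha_i\!=\!1$ collapses to $\sigma_i^2\|K_{0_i}C_0\sqrt{\mathcal{Q}^*}\|_2^2 < 1$, i.e. exactly \eqref{eq:sigma}. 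Invoking the sufficiency direction of Proposition \ref{cor:MIMO_system_stability} then certifies MSES of \eqref{eq_stochastic_sys_mult} for all such $\sigma_i$. The transient certificate is immediate: since $\sum_i B_{0_i}B_{0_i}^\top\!\succeq\!0$, the same constraint yields $A\mathcal{Q}^* + \mathcal{Q}^* A^\top \prec -\gamma_2^* I_n$ with $A\!=\!A_0-B_0 K_0 C_0$, and dualizing with $\mathcal{P}\!=\!(\mathcal{Q}^*)^{-1}$ reproduces the Lyapunov decay argument already used in Theorem \ref{thm:main_thm}.

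The genuinely delicate part — and where I expect the constraint \eqref{eq:open_loop_stable} inherited from Stage 1 to earn its keep — is establishing feasibility of \eqref{eq:controller_synthesis_stage2}, because output feedback cannot freely synthesize $M\!=\!K_0C_0\mathcal{Q}^*$ when $\text{rank}(C_0)\!<\!n$. My plan here is a subspace decomposition. Writing $\sum_i B_{0_i}B_{0_i}^\top\!=\!B_0 B_0^\top$ and testing the Stage-2 matrix against any $v\!\in\!\text{range}(B_0)^\perp$ (so that $B_0^\top v\!=\!0$), both the additive term $B_0 B_0^\top$ and the whole feedback contribution $B_0\widetilde K C_0\mathcal{Q}^*$ drop out, leaving $v^\top(A_0\mathcal{Q}^*+\mathcal{Q}^* A_0^\top)v$, which is strictly negative by \eqref{eq:open_loop_stable}. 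Thus the uncontrolled subspace is discharged for free by the open-loop certificate, while on $\text{range}(B_0)$ I would drive the corresponding diagonal block arbitrarily negative with a sufficiently high-gain $\widetilde K$, closing the argument by a Schur-complement/continuity step and finally choosing $\gamma_2\!>\!0$ small.

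The crux of this last step, and the main obstacle, is guaranteeing that the output-feedback term retains enough authority on $\text{range}(B_0)$ to dominate the residual $B_0 B_0^\top$ there — equivalently, that $C_0\mathcal{Q}^*$ does not annihilate the controlled directions, i.e. $\mathcal{Q}^*\,\text{range}(B_0)\cap\ker C_0\!=\!\{0\}$. I would discharge this using the controllability/observability of Assumption \ref{AS:sys} together with the dimension count implicit in the partial-measurement setting. This is the one place where a careful argument, rather than bookkeeping, is required, and it is the step most deserving of scrutiny.
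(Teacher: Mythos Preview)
Your treatment of the MSES guarantee, the uncertainty bound \eqref{eq:sigma}, and the transient certificate is essentially identical to the paper's proof: the paper too substitutes $\widetilde M=\widetilde K C_0\mathcal{Q}^*$ into the machinery of Theorem~\ref{thm:main_thm}/Proposition~\ref{cor:MIMO_system_stability}, uses $\mathcal{Q}^*=\sqrt{\mathcal{Q}^*}\sqrt{\mathcal{Q}^*}^\top$ to rewrite \eqref{eq:lmi_feas2} as $\sigma_i\|\widetilde K_iC_0\sqrt{\mathcal{Q}^*}\|_2<1$, and reads off the transient bound from $\sum_i B_{0_i}B_{0_i}^\top\succeq 0$. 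On these points you and the paper agree line for line.

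On feasibility you actually go further than the paper, which simply asserts in one sentence that \eqref{eq:open_loop_stable} together with Assumption~\ref{AS:sys} guarantees a feasible $\widetilde K$ for some $\gamma_2\!\geq\!0$, without the subspace decomposition. Your decomposition into $\operatorname{range}(B_0)^\perp$ (handled by \eqref{eq:open_loop_stable}) and $\operatorname{range}(B_0)$ (to be dominated by high gain) is the natural way to make that sentence precise, and you are right to flag the crux as the non-degeneracy condition $\mathcal{Q}^*\operatorname{range}(B_0)\cap\ker C_0=\{0\}$. However, your proposed resolution does not close the gap: controllability and observability of $(A_0,B_0,C_0)$ constrain only $A_0$-invariant subspaces relative to $\operatorname{range}(B_0)$ and $\ker C_0$, and say nothing about how an \emph{arbitrary} positive-definite $\mathcal{Q}^*$ rotates $\operatorname{range}(B_0)$ relative to $\ker C_0$; a dimension count alone cannot rule out a nontrivial intersection once $\operatorname{rank}(B_0)+\dim\ker C_0>n$. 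So you have correctly located the obstacle --- one the paper's proof glosses over as well --- but the appeal to Assumption~\ref{AS:sys} does not, by itself, discharge it.
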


\begin{proof}
Since pre-computed $\mathcal{Q}^*$ satisfies the Lyapunov-like stability constraint \eqref{eq:open_loop_stable}, we can use Assumption\,\ref{AS:sys} to guarantee the existence of a control gain $\widetilde{K}$ that satisfies the constraint in \eqref{eq:controller_synthesis_stage2} for some $\gamma_2\!\geq\!0$\,. The rest of the proof follows similar arguments as in Theorem\,\ref{thm:main_thm} after we replace $\widetilde{M}$ by $\widetilde{K}C_0\mathcal{Q}^*$\!. Note the addition of the sparsity-promoting $l_{1,p}$-norm\footnote{The sparsity can be enforced directly on $K$ in \eqref{eq:controller_synthesis_stage2}. But we enforced it on $KC_0\mathcal{Q}^*$ to maintain uniformity with the earlier formulation in \eqref{eq:controller_synthesis_sparse}.} on $\widetilde{M}\!=\!\widetilde{K}C_0\mathcal{Q}^*$\!. Moreover, by the definition of the square root of a matrix, we have $\mathcal{Q}^*\!=\!\sqrt{\mathcal{Q}^*}\sqrt{\mathcal{Q}^*}^\top$\!. Therefore, the constraint \eqref{eq:lmi_feas2} from Theorem\,\ref{thm:main_thm} can be reformulated as
\begin{align*}
    \sigma_i^2\widetilde{K}_iC_0\sqrt{\mathcal{Q}^*}\sqrt{\mathcal{Q}^*}^\top C_0^\top \widetilde{K}_i^\top\!\!<\! 1\!\implies\!\sigma_i\|\widetilde{K}_iC_0\sqrt{\mathcal{Q}^*}\|_2\!<\!1
\end{align*}
where $\widetilde{K}_i$ is the $i^\text{th}$ row of $\widetilde{K}$. Thus we eliminate the variables $\beta_i$ and the associated constraints \eqref{eq:lmi_feas1}; and replace the penalty on $\beta_i$ by a penalty on $\|K_iC_0\sqrt{\mathcal{Q}^*}\|_2$ in order to maximize the allowable uncertainties. This completes the proof.
\end{proof}

{It is important to reiterate that the proposed sparse control synthesis framework under partial observation is very generic and does not make any assumptions on the location or minimal number of PMU devices. On the contrary, the  optimal PMU placement is a well-studied problem and recent works such as} \cite{gou2008optimal,hajian2011optimal,vaidya2013optimal,peppanen2012optimal,akhlaghi2016optimal} {guarantee the observability of the system with minimum number of PMUs. Combining these works on optimal placement of PMUs and the sparse control synthesis framework will be an interesting direction to pursue.}

\section{Numerical Example}
\label{Sec_simulation}

MATLAB-based Power System Toolbox \cite{PST} is used to illustrate the proposed stochastic control framework on the IEEE 39-bus test system \cite{moeini2015open} consisting of 10 generators and 19 loads. Here we briefly describe the network preserving model with linear frequency dependent real power loads \cite{bergen1981structure}, which is used for the numerical studies. Swing equations are used to model the synchronous generator dynamics, while loads (primarily driven by induction motors) are modeled as affinely dependent on the frequencies. In the augmented network model including the internal buses for the generators \cite{bergen1981structure}, the (internal) generator buses are denoted by $\mathcal{G}$ and the load buses as $\mathcal{L}$\,. Moreover, due to the internal bus representation, the generators and loads in the augmented network are not connected to the same bus, i.e. $\mathcal{G}\!\cap\!\mathcal{L}\!=\!\emptyset$\,. Compactly, the power system dynamic model is given by: 
\begin{subequations}\label{E:power_dyn}
\begin{align}
  \!\!\!\!M_i \dot{\omega}_{i}  \!+\!D _i \dot{\delta}_i  &=  P_{g_i}   - P_{e_i}(\delta) &\forall i\!\in\!\mathcal{G},\!\!\\
  D _i   \dot{\delta}_{i}  &=   -  P_{d_i}  - P_{e_i}(\delta) &\forall i\!\in\!\mathcal{L},\!\!\\
  P_{e_i}(\delta)&=\!\sum_j\!E_i E_jY_{ij}\cos\left(\delta_i\!-\!\delta_j\!-\!\theta_{ij}\right)&\forall i,\!\!
\end{align}\end{subequations}
where $E_i$ and $\delta_i$ are, respectively, the magnitude and phase angle of the $i^\text{th}$ bus voltages; $\delta$ is the vector of all bus angles ($\delta_i$); $Y_{ij}$ and $\theta_{ij}$ are, respectively, the modulus and the phase angle of the transfer admittance between buses $i$ and $j$; while $P_{e_i}(\delta)$ is the electrical power injected into the network at bus-$i$\,. For each generator at bus $i\!\in\!\mathcal{G}$\,, $M_i\!>\!0$ is the inertia constant, $ D_i\!>\! 0   $ is the damping coefficient, and $ P_{g_i}\!>\!0 $ is the mechanical power input. For each load at bus $i\!\in\!\mathcal{L}$\,, $D_i \!>\! 0$ is the load-frequency coefficient and $P_{d_i} \!>\! 0  $ is the load demand at equilibrium. As described in Sec.\,\ref{Sec_model}, the loads consist of controllable and non-controllable parts, i.e. $P_{d_i}\!=\!P_{cd_i}\!+\!P_{ncd_i}$\,.

After the adjustment for the reference bus (indexed by `R') phase angle \cite{bergen1981structure}, we can express \eqref{E:power_dyn} more compactly as: 
\begin{align}\label{E:power_compact}
    &\dot{x}= F(x)+B\,P_{in}\,,\\ \text{with}~&x\!:=\!\left(\lbrace \omega_i\rbrace_{i\in\mathcal{G}}\,,\, \lbrace \delta_i\!-\!\delta_R\rbrace_{i\in\mathcal{G}\cup\mathcal{L}\backslash\lbrace R\rbrace}
    \right)^{\!\top}\!\!, ~P_{in}\!=\!P_g\!-\!P_d\,,\notag
\end{align}
where $x\!\in\!\mathbb{R}^n$, $P_{in}\!\in\!\mathbb{R}^m$, $F\!:\!\mathbb{R}^n\!\mapsto\!\mathbb{R}^n$ is locally Lipschitz, and $B\!\in\!\mathbb{R}^{n\times m}$. We obtain \eqref{E:sys} by linearizing \eqref{E:power_compact} around the nominal operating point $x_e$ satisfying $F(x_e)\!+\!BP_{in,0}\!=\!0$\,, where $P_{in,0}$ is the nominal power introduced in \eqref{E:power_nominal}. Generator power inputs are considered deterministic (Sec.\,\ref{Sec_model}), while loads (controllable and non-controllable) display both additive and multiplicative stochastic uncertainties as in \eqref{eq_stochastic_sys_compact}.

\subsection{Scenario 1: Full-State Measurements}
\begin{figure*}[thpb]
\centering
\subfigure[Pareto front of optimal solutions ($\gamma_1^*\!=\!0.06$)]{
\includegraphics[width=0.3\linewidth]{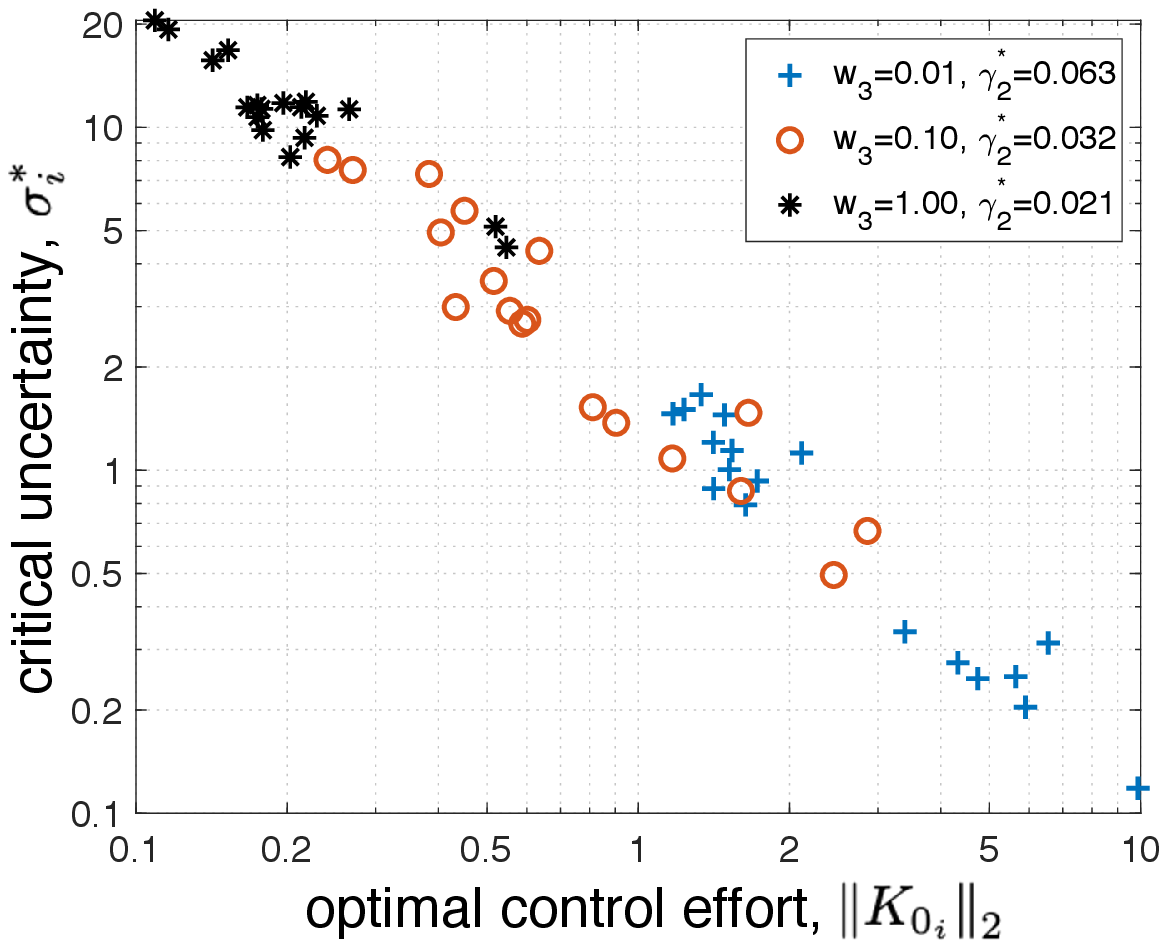}\label{fig:pareto}
}
\subfigure[Control gain structure]{
\includegraphics[width=0.3\linewidth]{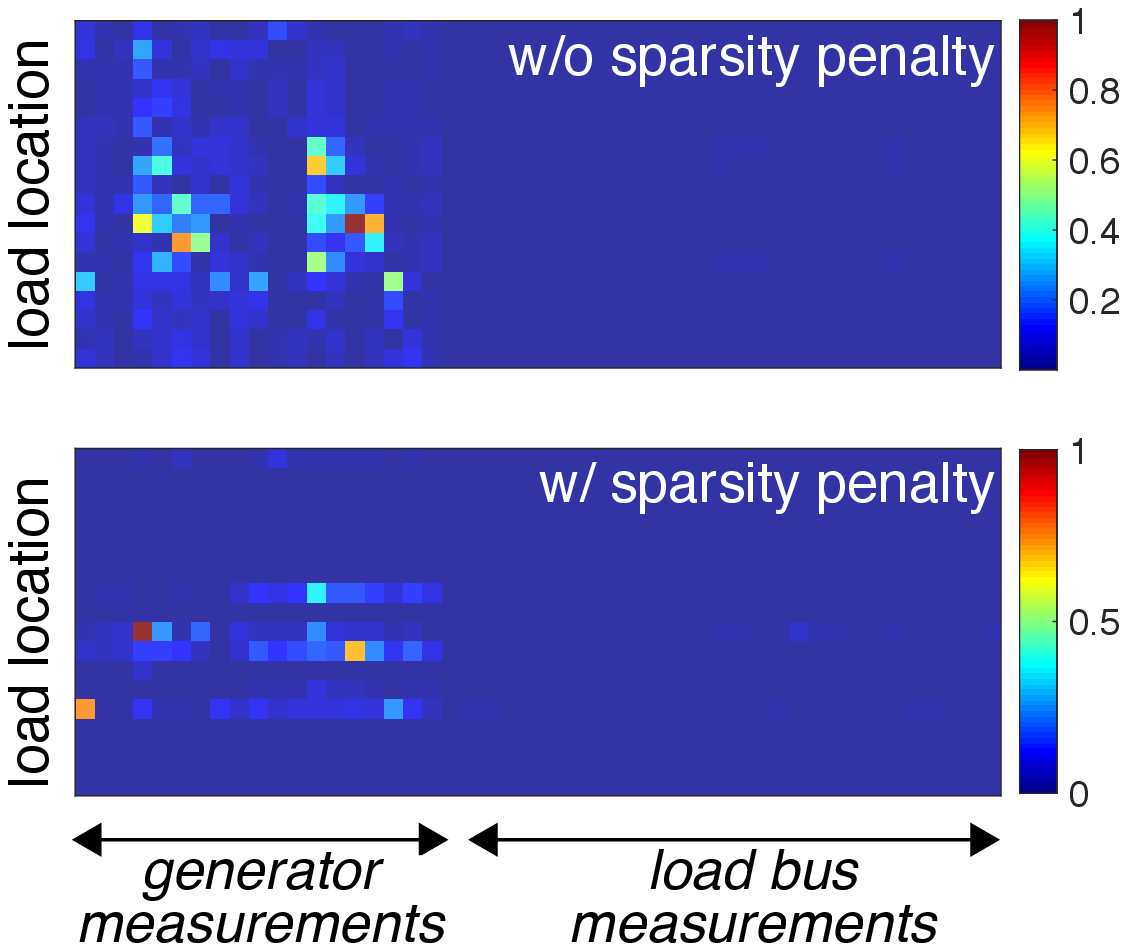}\label{fig:gain}
}
\subfigure[Impact of penalties on sparsity ($\gamma_1^*\!=\!0.06$)]{
\includegraphics[width=0.3\linewidth]{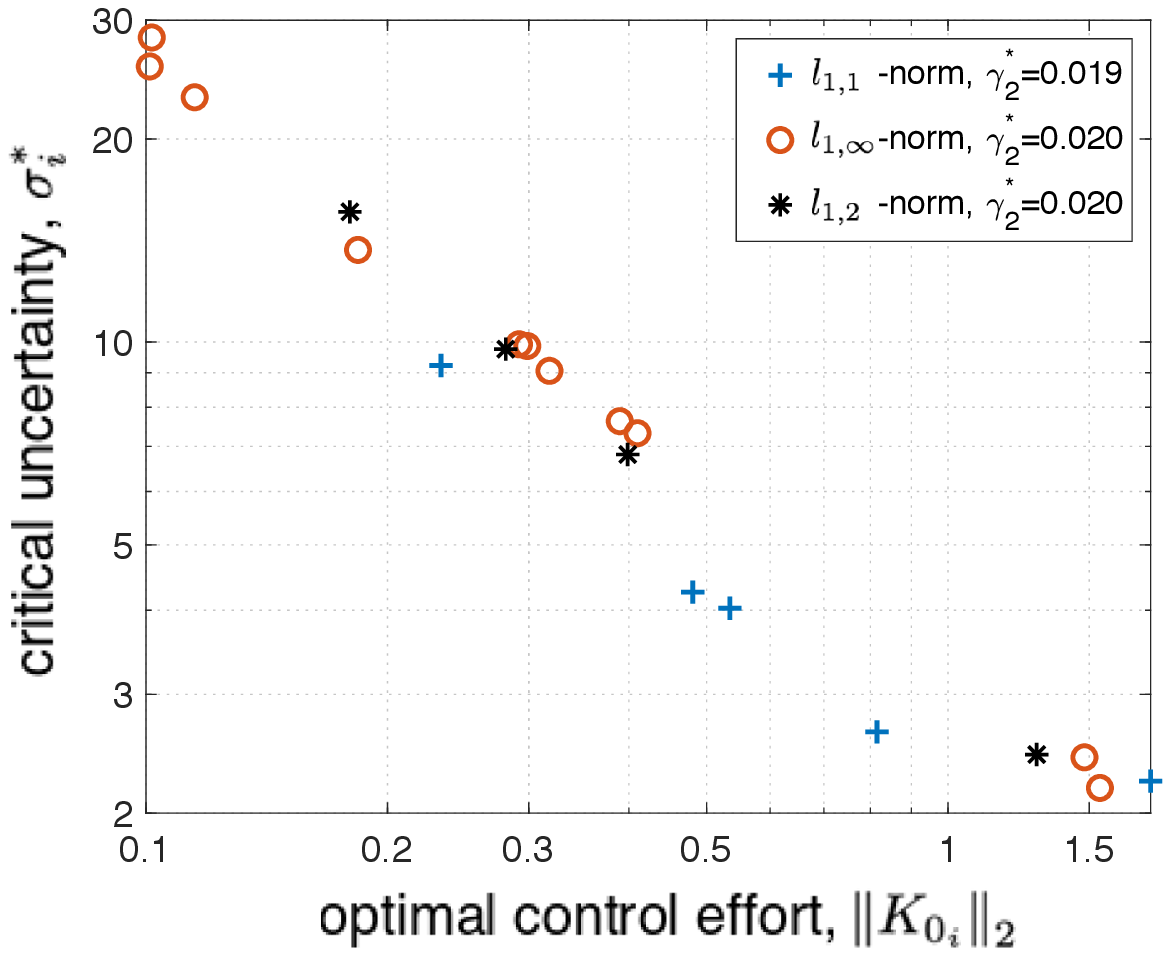}\label{fig:norm}
}
\caption[]{Analysis of the optimal solutions and the impact of various choices for optimization parameters: (a) shows the relations between optimal control effort ($\|K_{0_i}\|_2$), critical uncertainties ($\sigma_i^*$), and the transient performance measure ($\gamma_2^*$), as we vary the weight $w_3$, while keeping $w_1\!=\!w_2\!=\!100$; (b) shows an example of the controller gain matrix ($K_0$) structures with (\textit{bottom}) and without (\textit{top}) the sparsity penalty; (c) shows the impact of different choices for the $l_{1,p}$ norms of $\widetilde{M}$, defined in \eqref{eq:M_norm}, as a sparsity-promoting additional penalty term, while choosing $\left(w_1,w_2,w_3\right)\!=\!(100,100,0.1)$.}
\label{fig:performance}
\end{figure*}
\begin{figure}
    \centering
    \includegraphics[width=0.75\linewidth]{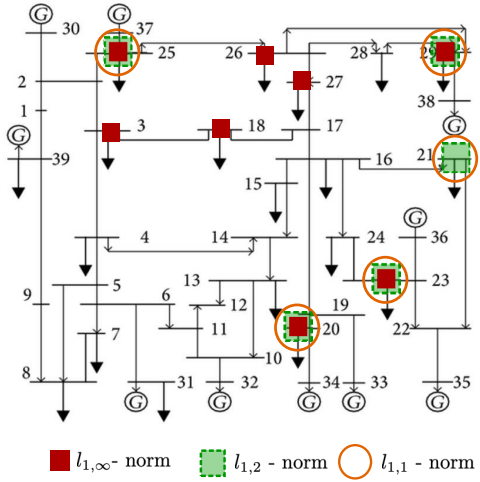}
    \caption{Illustration of the optimal sparse control locations, for three different choices of $l_{1,p}$-norms shown in Fig.\,\ref{fig:norm}. Both $l_{1,1}$ and $l_{1,2}$ norms pick the same load bus locations for control, but the load locations using the $l_{1,\infty}$-norm are somewhat different. All buses are equipped with sensors.}
    \label{fig:loc}
\end{figure}

\begin{figure}
    \centering
    \includegraphics[width=0.95\linewidth]{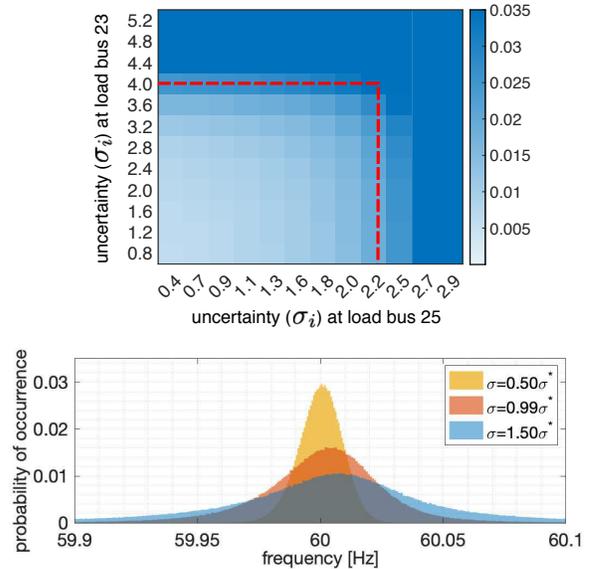}
    \caption{(Top) Variance of the closed-loop system frequency with varying uncertainty levels in controllable loads at two locations: bus 23 and bus 25 with respective critical uncertainty ($\sigma_i^*$) values of 4.03 and 2.23. (Bottom) System-wide frequency distributions at varying levels of uncertainties (relative to their critical values) in the controllable loads .}\vspace{-0.1in}
    \label{fig:sigma_1_2}
\end{figure}

\textbf{Analyzing Trade-off and Sparsity:} First we consider the scenario when each bus has a synchrophasor giving access to full-state measurements. Fig.\,\ref{fig:performance} illustrates the impact of various optimization choices, e.g. selection of the weights $w_{1,2,3}$ and inclusion/exclusion of the sparsity penalty term, on the optimal control effort ($\|K_{0_i}\|_2$), the critical uncertainties ($\sigma_i^*$), and the transient performance measure ($\gamma_2^*$). In particular, Fig.\,\ref{fig:pareto} shows the trade off between the optimal control effort ($\|K_{0_i}\|_2$) and the critical uncertainties ($\sigma_i^*$) at each load bus (plotted in the \textit{log-log} scale). Recall that this fundamental trade off is also suggested by the relation \eqref{eq:pareto}, especially since $\gamma_1^*\!=\!0.06$ remains same in the three cases. Moreover, as we increase the weight $w_3$ from 0.01 to 1, while keeping $w_1\!=\!w_2\!=\!100$, we observe that the transient performance measure ($\gamma_2^*$) deteriorates (from 0.063 to 0.021), in addition to lower control efforts and higher critical uncertainty levels. Fig.\,\ref{fig:gain} illustrates the impact of the addition of a sparsity-promoting penalty term, in the form of the norms ($\|\widetilde{M}\|_{1,p}$), on the structure of the optimal control gain matrix ($K_0$). Clearly, addition of the penalty term leads to a sparse control gain matrix. {As the rows of the controller gain matrix correspond to the controllable load location, the sparse controller provides frequency response services by engaging few controllable loads as opposed to the controller without enforcing the sparsity constraints. This shows the trade-offs between the sparsity and engaging the controllable resources which results in the reduction in cost of deployment of the sparse control for real-time implementation.} It is interesting to note that, since the transient dynamics is mostly dominated by the generator states, the control gains associated with the load bus measurements are relatively much lower compared to the generator bus measurements. Finally, Fig.\,\ref{fig:norm} shows the impact of the various choices of $p\!\in\!\lbrace 1,2,\infty\rbrace$ in the $l_{1,p}$ norm of $\widetilde{M}$ on the optimal solutions ($\gamma_1^*\!=\!0.06$ remains unchanged). The different norms appear to result in very close $\gamma_2^*$ values, with very little to differentiate in terms of the control effort and critical uncertainty values. The $l_{1,1}$-norm, however, seems to produce a sparse optimal control solution in which the control effort is relatively evenly distributed over the (selected few) control nodes. Fig.\,\ref{fig:loc} shows the optimal control locations for the different norms (as per Fig.\,\ref{fig:norm}), with an agreement between the $l_{1,1}$ and $l_{1,2}$-norm solutions, while the $l_{1,\infty}$-norm solution displays some differences.

\textbf{Overall Performance Evaluation:} Next we implement the optimal sparse controller (obtained using the $l_{1,1}$-norm) in closed-loop transient simulations of the IEEE 39-bus network. Controller performance is measured with respect to the variance in the observed grid frequencies under load uncertainties {via stochastic time-domain simulations. These simulations serve as a validation for the analytical findings such as the critical uncertainty identified from Eq.} \eqref{eq:critical_unc_full_obs} {and Eq.} \eqref{eq:sigma}. 
In particular, we evaluate the performance for various levels of load uncertainties, relative to the associated critical uncertainty values. Fig.\,\ref{fig:sigma_1_2} (top) shows the impact of varying uncertainties at two load locations: bus 23 and bus 25 with respective $\sigma_i^*$ values of 4.03 and 2.23. Closed-loop stochastic stability performance deteriorates (with increased frequency variance) as the uncertainties in the (controllable) loads increase. {The red-dotted line in Fig.}\,\ref{fig:sigma_1_2} {(top) shows the critical uncertainty obtained analytically from Eq.} \eqref{eq:critical_unc_full_obs}. 
Figure\,\ref{fig:sigma_1_2} (bottom) depicts a similar story via the frequency distributions, for three different uncertainty levels in the controllable loads across the system. As the uncertainty is increased, the distribution of frequency becomes heavy tailed indicating the frequent excursions of frequency away from the nominal acceptable range of operation (in line with the observations in \cite{pushpak2018stochastic}).

\begin{figure*}[thpb]
\centering
\includegraphics[width=0.95\linewidth]{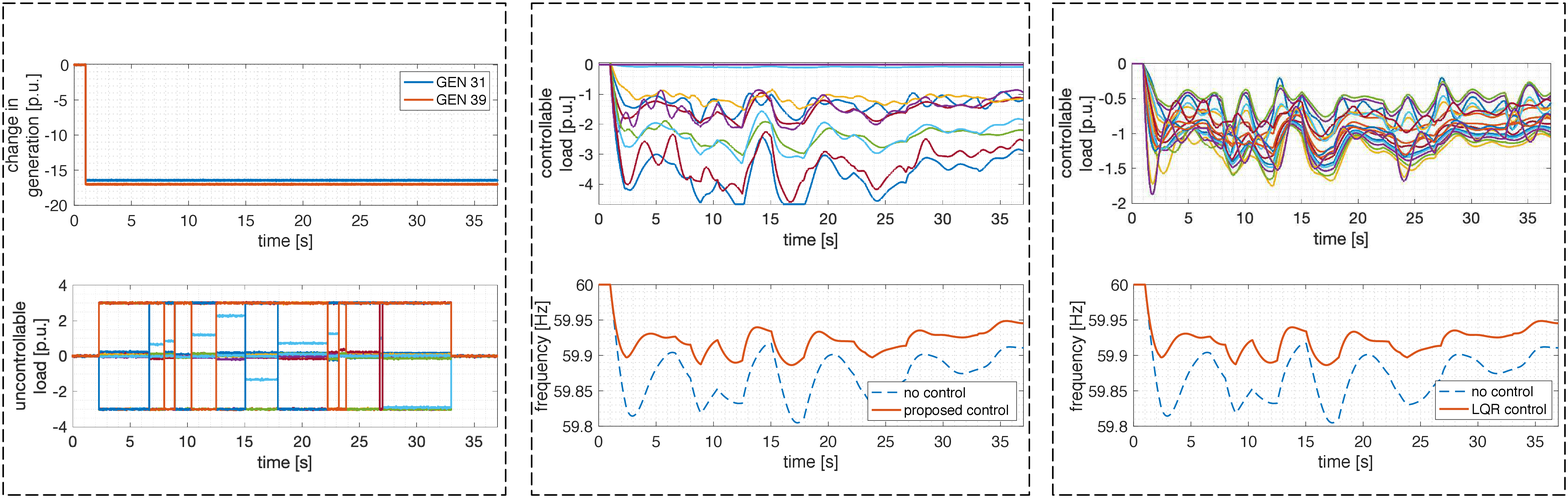}
\caption[]{{Time-domain plots to illustrate the performance of the proposed controller against a contingency scenario of large loss of generation resulting in an under-frequency event. \textbf{Left} plots show the loss of the generation at buses 31 and 39, along with transient disturbances in the uncontrollable loads. \textbf{Middle} plots show the controllable load power values, along with the improved frequency profile, when the \textit{proposed sparse} controller is applied, designed using $(w_1,w_2,w_3)=(2000,2000,0.2)$. \textbf{Right} plots show the action of a conventional LQR controller tuned to provide similar frequency response as the proposed sparse controller. Note the sparsity in controllable load power in the case of proposed controller, as compared to the LQR controller.}}
\label{fig:under}
\end{figure*}

\begin{figure*}[thpb]
\centering
\includegraphics[width=0.95\linewidth]{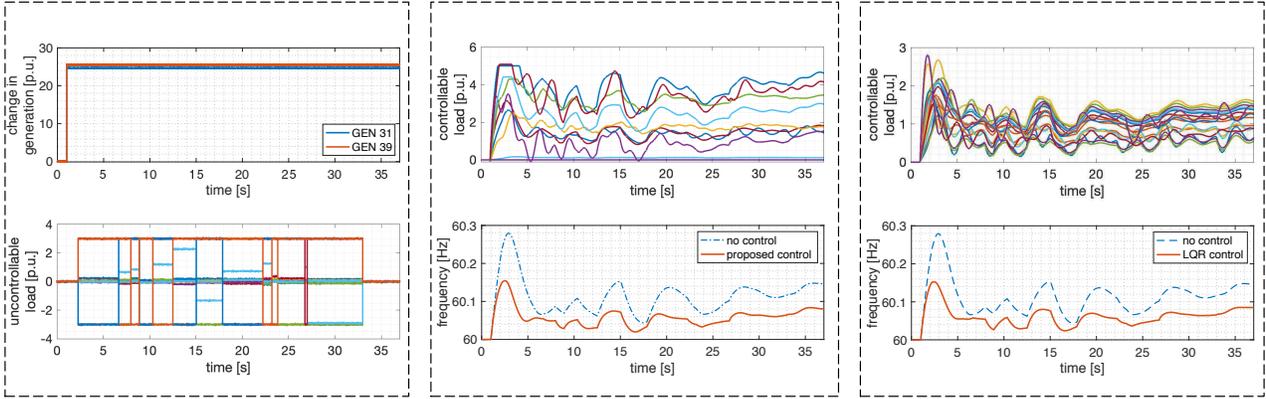}
\caption[]{{Time-domain plots to illustrate the performance of the proposed controller against a contingency scenario of sudden increase in generation (e.g., solar) resulting in an over-frequency event. \textbf{Left} plots show the increase in generation at buses 31 and 39, along with transient disturbances in the uncontrollable loads. \textbf{Middle} plots show the controllable load power values, along with the improved frequency profile, when the \textit{proposed sparse} controller is applied, designed using $(w_1,w_2,w_3)=(2000,2000,0.2)$. \textbf{Right} plots show the action of a conventional LQR controller tuned to provide similar frequency response as the proposed sparse controller. Note the sparsity in controllable load power in the case of proposed controller, as compared to the LQR controller.}}
\label{fig:over}
\end{figure*}

\begin{figure}[thpb]
\centering
\includegraphics[width=0.7\linewidth]{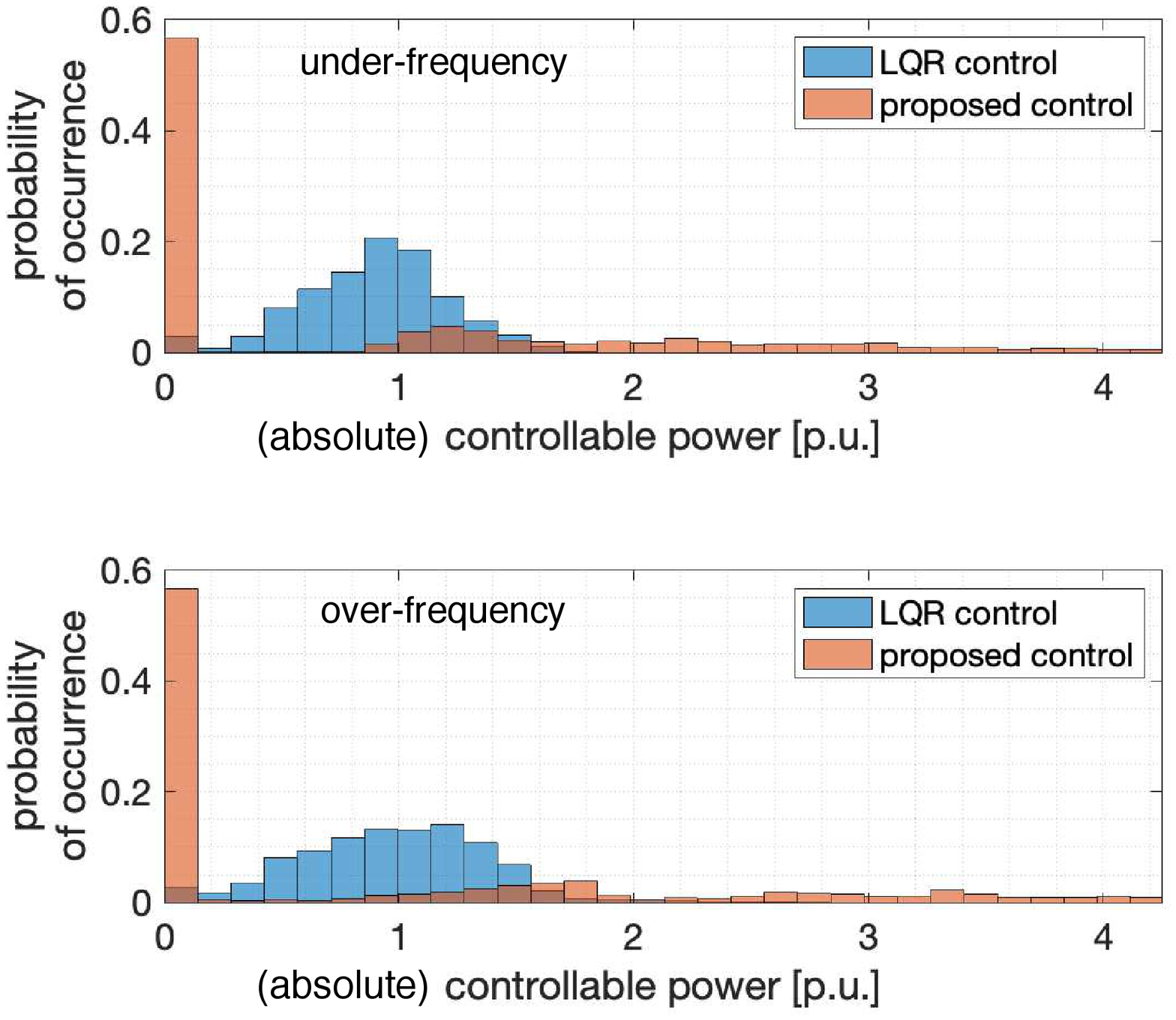}
\caption[]{{Comparison of the probability distributions of the absolute magnitude of the controllable load changes (in p.u.) for both the proposed and the LQR controllers. In both the under- and over-frequency contingency cases, the sparse controller engages only about 45\% of the load locations (8 out of a total of 18}, with about 55\% of controllable power set to 0\,. The LQR is shown to distribute the controllable power evenly across all the load locations.}
\label{fig:hist}
\end{figure}

{\textbf{Performance under Contingencies:}  Figs.\,}\ref{fig:over} and \ref{fig:under} {demonstrate via time-domain plots the closed-loop control performance of the proposed controller and a conventional linear quadratic regulator (LQR) control towards stochastic stabilization of the system under two contingencies -- specifically, a significant loss of generation at buses 31 and 39 resulting in large frequency drop (Fig.\,}\ref{fig:under}{), and a sudden large increase in generation at buses 31 and 39 (e.g., due to renewable fluctuations) resulting in over-frequency event (Fig.\,}\ref{fig:over}{). In both the scenarios, frequent disturbances are simulated in the non-controllable loads, along with the presence of uncertainties. The plots in the middle column demonstrate the performance of the proposed controller in providing \textit{primary frequency response} by reducing the controllable load power to improve the frequency profile. In Fig.\,}\ref{fig:under}{, the proposed sparse controller is shown to successfully increase frequency nadir from around 59.8\,Hz (without any control) to 59.9\,Hz (with control), by only engaging a few of the controllable load locations (e.g., only 8 of the 18 load locations). In Fig.\,}\ref{fig:over} {the highest frequency value is seen to decrease from around 60.3\,Hz (without any control) to around 60.15\,Hz (with control). In order to compare the performance of the proposed sparse controller, we design and implement a conventional LQR control, tuned appropriately to match the closed-loop frequency profile of the sparse controller. We use the controllable load power values between the two controllers to illustrate the impact of sparsity. In particular, the controllable load power plots in Figs.\,}\ref{fig:under} and \ref{fig:over} {reveal that while the conventional LQR engages all of the 18 available controllable loads across the network, the sparse controller only engages 8 of those. This is further exemplified in Fig.\,}\ref{fig:hist} {in which we compare the probability distributions of the absolute magnitude of the controllable load power values for the LQR controller and the proposed sparse controller. As expected, the plots reveal the relatively high percentage (around 55\%) of the loads that are never engaged by the sparse controller, while the LQR engages all the loads.}

{Finally, since the proposed power system model in Section} \ref{Sec_model} {and} \eqref{E:power_dyn} {consists of aggregate model for the controllable as well as the uncontrollable loads, the proposed sparse control synthesis is oblivious to the type of end-use devices and yields a controller that sets the value of the aggregated controllable power. The controllable loads aggregation may consist of numerous flexible devices, including inverter-interfaced devices such as a solar PV or a battery. The design of a control scheme that real-time engages the devices in the aggregation to provide primary frequency response has been extensively studied in prior efforts} \cite{moya2014hierarchical,lian2016hierarchical,nandanoori2018prioritized}{, and hence not elaborated further in this work. However, the system-wide synthesis of the proposed sparse controller illustrates the promise of the flexible loads in providing frequency response services under uncertainties. }

\subsection{Scenario 2: Partial-State Measurements}
We now consider a scenario when the synchrophasors are placed only on the generator buses, as typical of realistic power systems. The result of the two-stage optimal sparse control synthesis method described in Sec.\,\ref{sec:control_partial} is shown in Fig.\,\ref{fig:loc_partial}. The Fig.\,\ref{fig:loc_partial} top plot displays the optimal control solution with the generator-only measurements, in comparison with the full-state measurements (using the $l_{1,1}$-norm). The closeness of the two solutions can be intuitively explained based on the relative sparsity of the columns of control gain matrix (shown in Fig.\,\ref{fig:gain}) associated with the load bus measurements. The Fig.\,\ref{fig:loc_partial} bottom plot illustrates the sensing and control locations.
\begin{figure}
    \centering
    \includegraphics[width=0.75\linewidth]{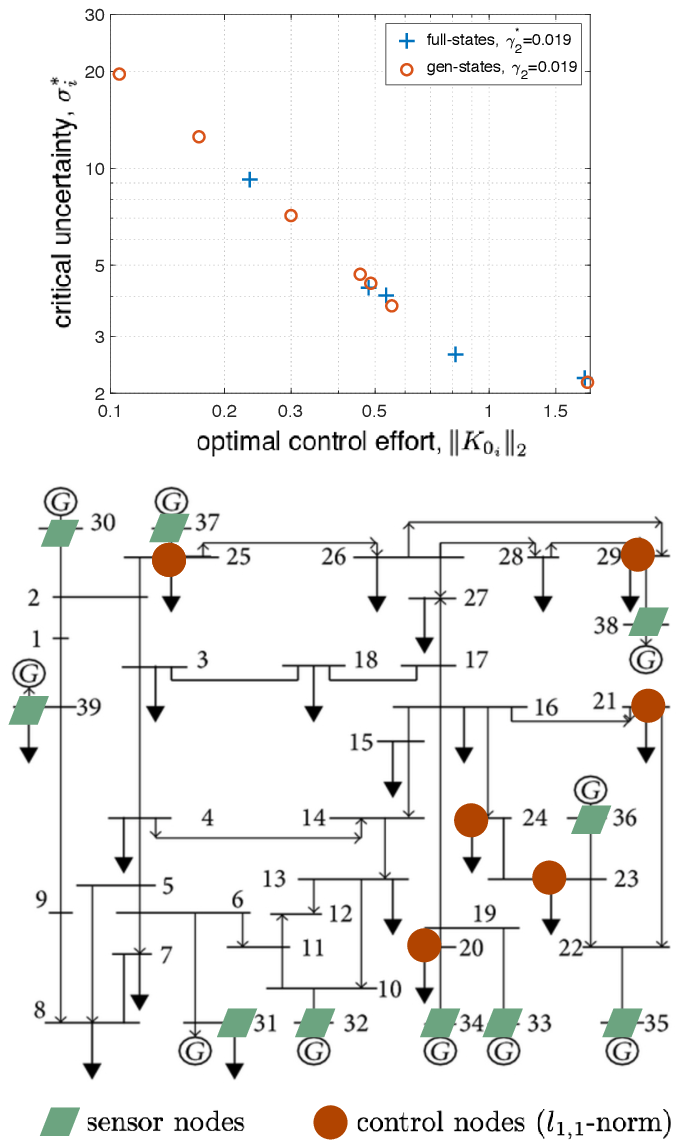}
    \caption{(Top) Two-stage optimal sparse control solution using the generator-state measurements alone is shown in comparison with the full-state measurement solution. (Bottom) Illustration of the sensor and control locations.}
    \label{fig:loc_partial}
\end{figure}

\subsection{Additional Remarks}

{\textbf{Computational Feasibility:} For the IEEE 39 bus system studies given in Fig.} \ref{fig:performance}, {there are a total of 48 states and 18 controllable load locations. The resulting sparse control synthesis (under full observation) problem leads to semi-definite programs (SDPs) with matrix variables for sizes} $48\times 48$. {Solved via CVX} \cite{grant2008cvx} {using Matlab\textregistered 2020b, the control synthesis problem takes around 43 seconds on a computer with the following configuration: 2.6 GHz Intel core i7 processor with 16GB of RAM. Furthermore, since the controller is designed offline and just uses the output measurements, the control execution is nearly instantaneous.
Computational feasibility of SDPs, such as the ones in} \eqref{eq:controller_synthesis}, \eqref{eq:controller_synthesis_sparse}, and \eqref{eq:controller_synthesis_stage1}-\eqref{eq:controller_synthesis_stage2}, {for large-scale problems is an active area of research. Recent results} \cite{majumdar2020recent,yurtsever2021scalable} {show promise for solving large-scale SDPs through economic storage management and cost effective arithmetic operations. In particular, the work in} \cite{yurtsever2021scalable} {reports} $\mathcal{O}(n)$ {performance (i.e., linear scaling) in terms of storage requirement and computational cost for SDPs of size} $n$ {across a wide range SDP solvers, including SDPT3} \cite{sdpt3} {and} SeDuMi \cite{sedumi} {available with CVX. 
}

\vspace{0.5 cm}

\noindent\textbf{Key Observation:} The numerical studies demonstrate that the LMI-based approach provides a systematic way of designing (and placing) optimal and sparse demand-side controllers that can guarantee desirable transient frequency stabilization under uncertainties in the controllable (and non-controllable) loads. The control synthesis method extends to the realistic scenario of partial-state measurements, such as when only the generator buses are equipped with synchrophasors. Moreover, a quantifiable measure of maximal allowable (critical) uncertainties at any load location is obtained, and the closed-loop frequency response performance demonstrably degrades at uncertainties beyond these critical levels.




\vspace{-0.1in}\section{Conclusion}
\label{Sec_conclusion}
This work considers the problem of achieving frequency response through demand-side control in the presence of uncertainty in the controllable as well as uncontrollable loads. An LMI-based demand-side frequency control synthesis problem is formulated for the power networks, that ensures stochastic stability and, in a multi-objective setting, promotes sparsity of the controller, maximizes allowable uncertainties, and enhances closed-loop transient performance.
%
Extensive numerical studies on the IEEE 39-bus system show the effect of penalizing for sparsity and different types of sparsity norms, impact of critical uncertainty on the system states which results in a heavy tailed distribution for frequencies. The Pareto front of optimal solutions between the optimal control efforts and critical uncertainties indicate that allowing more uncertainty in the controllable loads results in decreased closed-loop transient performance. 
Closed-loop stochastic stability is demonstrated even with only the generator states as measurements. 
%
%
%
Future work will address the extension of this control synthesis approach to uncertain generating resources (e.g. solar and wind); as well as explore the co-design problem of placement of sensing and control nodes, via enforcing both column- and row-sparsity on the control gain matrices.



\ifCLASSOPTIONcaptionsoff
  \newpage
\fi

\bibliographystyle{IEEEtran}
\bibliography{Refs.bib}

\end{document}